\tikzstyle{every picture}=[baseline=-0.25em,shorten <=-0.1pt]
\tikzstyle{dotpic}=[scale=0.5]
\tikzstyle{braceedge}=[decorate,decoration={brace,amplitude=1mm,raise=-1mm}]
\tikzstyle{dot}=[inner sep=0.7mm,minimum width=0pt,minimum height=0pt,fill=black,draw=black,shape=circle]
\tikzstyle{black dot}=[dot]
\tikzstyle{white dot}=[dot,fill=white]
\tikzstyle{gray dot}=[dot,fill=gray!40!white]
\tikzstyle{alt white dot}=[white dot,label={[xshift=3mm,yshift=-0.05mm,font=\tiny]left:$*$}]
\tikzstyle{alt gray dot}=[gray dot,label={[xshift=3mm,yshift=-0.05mm,font=\tiny]left:$*$}]
\tikzstyle{white norm}=[rectangle,fill=white,draw=black,minimum height=2mm,minimum width=2mm,inner sep=0pt,font=\small]
\tikzstyle{gray norm}=[white norm,fill=gray!40!white]
\tikzstyle{square box}=[rectangle,fill=white,draw=black,minimum height=5mm,minimum width=5mm,font=\small]
\tikzstyle{square gray box}=[rectangle,fill=gray!30,draw=black,minimum height=6mm,minimum width=6mm]
\tikzstyle{diredge}=[->]
\tikzstyle{cdedge}=[-latex]
\tikzstyle{rdiredge}=[<-]
\tikzstyle{dashed edge}=[dashed]
\tikzstyle{cross}=[preaction={draw=white, -, line width=3pt}]
\tikzstyle{point}=[regular polygon,regular polygon rotate=180,regular polygon sides=3,draw=black,scale=0.75,inner sep=-0.5pt,minimum width=.75cm,fill=white]
\tikzstyle{copoint}=[regular polygon,regular polygon sides=3,draw=black,scale=0.75,inner sep=-0.5pt,minimum width=.75cm,fill=white]
\newcommand{\pantsalg}{%
\,\begin{tikzpicture}[dotpic,scale=0.8]
		\node [style=none] (0) at (-0.75, -0.5) {};
		\node [style=none] (1) at (0.25, 0.5) {};
		\node [style=none] (2) at (0.75, -0.5) {};
		\node [style=none] (3) at (0.25, -0.5) {};
		\node [style=none] (4) at (-0.25, 0.5) {};
		\node [style=none] (5) at (-0.25, -0.5) {};
		\draw [style=diredge, in=-90, out=90] (2.center) to (1.center);
		\draw [style=diredge, in=90, out=-90] (4.center) to (0.center);
		\draw [style=diredge, in=90, out=90, looseness=2.00] (5.center) to (3.center);
\end{tikzpicture}\,}
\newcommand{\dotcounit}[1]{%
\,\begin{tikzpicture}[dotpic,yshift=-1mm]
\node [#1] (a) at (0,0.25) {}; 
\draw [diredge] (0,-0.2)--(a);
\end{tikzpicture}\,}
\newcommand{\dotunit}[1]{%
\,\begin{tikzpicture}[dotpic,yshift=1.5mm]
\node [#1] (a) at (0,-0.25) {}; 
\draw [diredge] (a)--(0,0.2);
\end{tikzpicture}\,}
\newcommand{\dotcomult}[1]{%
\,\begin{tikzpicture}[dotpic,yshift=0.5mm]
	\node [#1] (a) {};
	\draw [diredge] (-90:0.55)--(a);
	\draw [diredge] (a) -- (45:0.6);
	\draw [diredge] (a) -- (135:0.6);
\end{tikzpicture}\,}
\newcommand{\dotmult}[1]{%
\,\begin{tikzpicture}[dotpic]
	\node [#1] (a) {};
	\draw [diredge] (a) -- (90:0.55);
	\draw [diredge] (a) (-45:0.6) -- (a);
	\draw [diredge] (a) (-135:0.6) -- (a);
\end{tikzpicture}\,}
\newcommand{\dotdualmult}[1]{%
\!\begin{tikzpicture}[dotpic]
		\node [style=white dot] (0) at (0, 0.3) {};
		\node [style=none] (1) at (-0.5, -0.4) {};
		\node [style=none] (2) at (0.5, -0.4) {};
		\node [style=none] (3) at (0, 0.8) {};
		\draw [style=diredge] (3.center) to (0);
		\draw [style=diredge, in=15, out=-30, looseness=1.50] (0) to (1.center);
		\draw [style=diredge, in=165, out=-150, looseness=1.50] (0) to (2.center);
\end{tikzpicture}\!}
\newcommand{\dotnorm}[1]{%
\,\begin{tikzpicture}[dotpic,yshift=0.4mm]
		\node [style=none] (0) at (0, -0.4) {};
		\node [style=white norm] (1) at (0, -0) {};
		\node [style=none] (2) at (0, 0.5) {};
		\draw (0.center) to (1);
		\draw [style=diredge] (1) to (2.center);
\end{tikzpicture}\,}
\newcommand{\whiteunit}{\dotunit{white dot}}
\newcommand{\whitecounit}{\dotcounit{white dot}}
\newcommand{\whitemult}{\dotmult{white dot}}
\newcommand{\whitecomult}{\dotcomult{white dot}}
\newcommand{\graymult}{\dotmult{gray dot}}
\newcommand{\prodmult}[2]{\dotmult{#1}\!\!\!\!\dotmult{#2}}
\newcommand{\cat}[1]{\ensuremath{\mathbf{#1}}\xspace}
\newcommand{\FHilb}{\cat{FHilb}}
\newcommand{\Rel}{\cat{Rel}}
\DeclareMathOperator{\Mor}{Mor}
\DeclareMathOperator{\Proj}{Proj}
\DeclareMathOperator{\dom}{dom}
\theoremstyle{plain}
\newtheorem{theorem}{Theorem}[section]
\newtheorem{lemma}[theorem]{Lemma}
\newtheorem{proposition}[theorem]{Proposition}
\newtheorem{corollary}[theorem]{Corollary}
\theoremstyle{definition}
\newtheorem{definition}[theorem]{Definition}
\newtheorem{example}[theorem]{Example}
\newtheorem{remark}[theorem]{Remark}
\title{Compositional Quantum Logic}
\author{Bob Coecke}
\author{Chris Heunen}
\author{Aleks Kissinger}
\address{Department of Computer Science, University of Oxford, Wolfson
  Building, Parks Road, OX1 3QD, Oxford, UK}
\email{\{coecke,heunen,alek\}@cs.ox.ac.uk}
\date{\today}
\begin{document}
\begin{abstract}
  Quantum logic aims to capture essential quantum mechanical structure
  in order-theoretic terms. 
  The Achilles' heel of quantum logic is the absence of a canonical
  description of composite systems, given descriptions of their
  components.  
  We introduce a framework in which order-theoretic structure
  comes with a primitive composition operation.  The order is
  extracted from a generalisation of C*-algebra that applies to arbitrary dagger symmetric monoidal categories, which also provide the composition operation.
  In fact, our construction is entirely compositional,  without any additional assumptions on limits or enrichment.  Interpreted in the category of finite-dimensional Hilbert spaces,   it yields the projection lattices of arbitrary finite-dimensional  C*-algebras. Interestingly, there are  models that falsify
  standardly assumed correspondences, most notably the correspondence between noncommutativity of the algebra and nondistributivity of the order.
\end{abstract}
\maketitle

\section{Introduction}

In 1936, Birkhoff and von Neumann questioned
whether the full Hilbert space structure was needed to capture the
essence of quantum mechanics~\cite{BvN}.  They argued that the
order-theoretic structure of the closed subspaces of state space, or
equivalently, of the projections of the operator algebra of
observables, may already tell the entire story. To be more precise, we need to consider an order
together with an order-reversing involution on it, a so-called
\emph{orthocomplementation}, which can also be cast as an
\emph{orthogonality relation}.  Support along those lines comes from
Gleason's theorem~\cite{Gleason}, which characterises the Born rule in terms
of order-theoretic structure. In turn, via Wigner's
theorem~\cite{Wigner}, this fixes unitarity of the dynamics.   

These developments prompted Mackey to formulate his programme for
the mathematical foundations of quantum mechanics: the reconstruction of
Hilbert space from  operationally meaningful  axioms on  an order-theoretic
structure~\cite{Mackey}.   In 1964, Piron ``almost'' completed that
programme for the infinite-dimensional case~\cite{Piron64,Piron}. Full completion was achieved much more recently, by Sol\`er in 1995~\cite{soler:orthomodular}.\footnote{See also the survey~\cite{StubbeSteirt}, which provides a comprehensive overview of the entire reconstruction, drawing from the fundamental theorem of  projective geometry. Reconstructions of quantum theory have recently seen a great revival~\cite{HardyAxiom, Chiri2}. In contrast to the Piron-Sol\`er theorem, this more recent work is mainly restricted to the finite-dimensional case, and focuses on operational axioms concerning how (multiple) quantum and classical systems interact.}

Birkhoff and von Neumann coined the term `quantum logic', in
light of the developments in algebraic logic which were also
subject to an order-theoretic paradigm. In particular they observed
that the distributive law for meets and joins, which is key to the
deduction theorem in classical logic, fails to hold for the lattice of
closed subspaces for a Hilbert space~\cite{BvN}.   

This failure of distributivity and hence the absence of a deduction
theorem resulted in rejection of the quantum `logic' idea by a
majority of logicians. However, while the name quantum logic was
retained, many of its researchers also rejected the direct link to
logic, and simply saw quantum logic as the study of the
order-theoretic structure associated to quantum phenomena, as well as
other structural paradigms that were proposed thereafter~\cite{Foulis,
  Ludwig}. 


\subsection*{The quantum logic paradigm.}   In the Mackey-Piron-Sol\`er reconstruction, the elements of the partially ordered set become
the projections on the resulting Hilbert space, that is, the self-adjoint
idempotents of the algebra of operators on the Hilbert space:
\begin{equation}
  p \circ p = p,  \qquad \qquad p^\dag= p.
\end{equation}
Conversely, the ordering  can be recovered from the composition structure on these projections:
\begin{equation}
  p \leq q \;\;\Longleftrightarrow\;\; p\circ q= p,
\end{equation}
and the orthogonality relation can be recovered from it, too: 
\begin{equation}
  p \perp q \;\;\Longleftrightarrow\;\; p \circ q = 0.
\end{equation}

In fact, the reconstruction does not  
produce Hilbert space, but Hilbert
space with superselection rules. That is, depending on the particular
nature of the ordering that we start with, it either 
produces quantum theory or classical theory, or combinations thereof. 

The presence of ``quantumness'' is famously heralded in order-theoretic
terms by the failure of the distributive law, giving rise to the
following comparison. 
\[
  {\mbox{classical} \over \mbox{quantum}}
  \;\;\simeq\;\; {\mbox{distributive} \over \mbox{nondistributive}}
\]
This translates as follows to the level of operator algebra.
\[
  {\mbox{classical} \over \mbox{quantum}}
  \;\;\simeq\;\;
  {\mbox{commutative} \over \mbox{noncommutative}}
\]
Thus, the combination yields the following slogan.
\begin{equation}\label{eq:NonComDist}
  {\mbox{distributive} \over \mbox{nondistributive}}
  \;\;\simeq {\mbox{commutative}\;\; \over \mbox{noncommutative}}
\end{equation}
This is indeed the case for the projection lattices of arbitrary
von Neumann algebras: the projection lattice is distributive if and only if the 
algebra is commutative~\cite[Proposition~4.16]{redei:quantumlogic},
and has been a guiding thought within the quantum structures
research community. 

\subsection*{Categorical quantum mechanics.} More recently, drawing on
modern developments in logic and computer science, and mainly a branch
called type-theory, Abramsky and Coecke introduced a radically
different approach to quantum structures that has gained prominence,
which takes \emph{compositional structure} as the starting
point~\cite{AC1}.  Proof-of-concept was provided by the fact that many 
quantum information protocols which crucially rely on the description
of compound quantum systems could be very succinctly derived at a high
level of abstraction. 

In what is now known as categorical quantum mechanics, composition of
systems is treated as a primitive connective, typically as a so-called
\emph{dagger symmetric monoidal category}. Additional axioms
may then be imposed on such categories to capture the particular
nature of quantum compoundness. In other words, a set of
equations that axiomatise the Hilbert space tensor products is
generalised to a broad range of theories.  Importantly, at no point is
an underlying vector-space like structure assumed.    

In contrast to quantum logic, this approach led to an abstract
language with high expressive power, that enabled one to address concretely
posed problems in the area of quantum computing
(see e.g.~\cite{BoixoHeunen, CK, DP2, Clare}) and quantum foundations
(see e.g.~\cite{CES}), and that has even led to interesting connections between
quantum structures and the structure of natural language~\cite{CSC, QLog}.    

One of the key insights of this approach is the fact that many notions
that are primitive in Hilbert space theory, and hence quantum theory,
can actually be recovered in compositional terms.  For example,
given the pure operations of a theory, one can define  mixed
operations in purely compositional terms, which together give rise to
a new dagger symmetric monoidal category~\cite{SelingerCPM}.    We
will refer to this construction, as (Selinger's)
\emph{CPM--construction}.  While this construction applies to  arbitrary dagger
symmetric monoidal categories (as shown in~\cite{SelingerAxiom, 
  CoeckeHeunen}), Selinger also assumed
\emph{compactness}~\cite{KellyLaplaza}, something that we will also do
in this paper.  These structures are called \emph{dagger compact categories}. 

Another example, also crucial to this paper, is the fact that
orthonormal bases can be expressed purely in terms of certain
so-called \emph{dagger Frobenius algebras}, which only rely on dagger
symmetric monoidal structure~\cite{CPV,abramskyheunen:hstar}.     
In turn, 
these dagger Frobenius algebras enable one to define derived concepts
such as stochastic maps. All of this still occurs within the language
of dagger symmetric monoidal categories~\cite{CPaqPav}.  We will refer
to this construction as the \emph{Stoch--construction}.
Similarly, 
finite-dimensional C*-algebras can also be realised as certain dagger Frobenius algebras, 
internal to the dagger compact category of finite-dimensional Hilbert spaces and
linear maps, the tensor product, and the linear algebraic adjoint~\cite{VicaryCstar}.   

Recently~\cite{QPL_CPstar}, the authors proposed a construction, called the
\emph{CP*--construction}, that generalises this correspondence to
certain dagger Frobenius algebras in arbitrary dagger compact
caterories. At the same time, this construction unifies the CPM--construction 
and the Stoch-construction, starting from a given dagger
compact category.  The resulting structure is an abstract approach to
classical-quantum interaction, with Selinger's CPM--fragment playing 
the role of the ``purely quantum'', and the abstract stochastic maps fragment
playing the role of the ``purely classical''.\footnote{There is an earlier
  unification of the CPM--construction and the
  Stoch-construction~\cite{SelingerIdempotent}, into which our construction faithfully
  embeds, see~\cite{QPL_CPstar}.  However, 
  this construction does not support the interpretation of
  ``generalised C*-algebras''~\cite{QPL_CPstar}.}   

\subsection*{Overview of this paper.} In this paper, we take this
framework of ``generalised C*-algebras'' as a starting point, and
investigate the structure of the dagger idempotents.  
We will refer to these as in short as projections too, since these
dagger idempotents provide the abstract counterpart to projections of
concrete C*-algebras.  

We show that, just as in the concrete case, one
always obtains a partially ordered set with an orthogonality relation.
However, equation~\eqref{eq:NonComDist} breaks down in general.
More specifically, in the dagger compact category  of sets and relations with the Cartesian
product as tensor  and the relational converse as the dagger, there
are commutative algebras with  nondistributive projection lattices.

As mentioned above, the upshot of our approach is that it resolves a
problem that rendered quantum logic useless for modern purposes: 
providing an order structure representing compound systems at an abstract level, given 
the  ones describing the component systems. Since we start
with a category with monoidal structure, of course composition for
objects is built in from the start, and it canonically lifts to algebras thereon.  
Let us emphasise that our framework relies solely on dagger
categorical and compositional structure: the (sequential) composition of
morphisms, and the (parallel) tensor product of morphisms. This is a key
improvement over previous work~\cite{harding:link,
  heunenjacobs:daggerkernellogic, harding:baer, jacobs:baer}  that combines
order-theoretic and compositional structure.\footnote{The
construction in~\cite{harding:link} needs the rather strong extra assumption of dagger
biproducts, while the construction
in~\cite{heunenjacobs:daggerkernellogic} requires the weaker
assumption of dagger kernels. The intersection of both constructions
can be made to work, provided one additionally assumes a weak form of additive enrichment~\cite{harding:baer}.}
 

\section{Background}

For background on symmetric monoidal categories we refer to the
existing literature on the subject
~\cite{CatsII}.  In particular we will
rely on their graphical representations, which are surveyed 
in~\cite{SelingerSurvey}.   

Diagrams will be read from bottom to top. Wires represent the
objects of the category, while boxes or dots or any other entity with
incoming and outcoming wires -- possibly none -- represents a
morphism, and their type is determined by the respective number of
incoming and outgoing wires.  The directions of arrows on wires
represent \emph{duals} of the compact structure.  

Our main objects of study are symmetric Frobenius algebras, defined as follows. Let us emphasise that this is a larger class of Frobenius algebra than just the commutative ones, which previous works on categorical quantum mechanics have mainly considered.

\begin{definition}
 Let $(\cat{C}, \otimes, I)$ be a symmetric monoidal category which  carries a \em dagger structure\em, that is, an identity-on-objects contravariant involutive endofunctor $\dagger: \cat{C}^{op}\to \cat{C}$. 
A \emph{Frobenius algebra}  in $\cat{C}$ is an object $A$ of $\cat{C}$  together  with morphisms 
\[
\whitemult: A\otimes A\to A
\qquad
\whiteunit: I\to A
\qquad
\whiteunit: A\to A\otimes A
\qquad
\whitecomult: A\to I 
\]
satisfying the following equations, called \emph{associativity} (top),
\emph{coassociativity}  (bottom), \emph{(co)unitality}, and the \emph{Frobenius condition}:
\ctikzfig{fa_axioms}
A Frobenius algebra is \emph{symmetric} when the following equations hold:
\[
\beginpgfgraphicnamed{symmetric_1}
\begin{tikzpicture}[dotpic]
	\begin{pgfonlayer}{nodelayer}
		\node [style=white dot] (0) at (-1.5, -0.25) {};
		\node [style=none] (1) at (-2, 0.25) {};
		\node [style=none] (2) at (-1, 0.25) {};
		\node [style=none] (3) at (-1, 1.25) {};
		\node [style=none] (4) at (-2, 1.25) {};
		\node [style=none] (5) at (1, 1.25) {};
		\node [style=white dot] (6) at (1.5, -0.25) {};
		\node [style=none] (7) at (2, 0.5) {};
		\node [style=none] (8) at (2, 1.25) {};
		\node [style=none] (9) at (1, 0.5) {};
		\node [style=none] (10) at (0, 0.25) {$=$};
		\node [style=white dot] (11) at (1.5, -1) {};
		\node [style=white dot] (12) at (-1.5, -1) {};
	\end{pgfonlayer}
	\begin{pgfonlayer}{edgelayer}
		\draw [in=-90, out=154] (0) to (1.center);
		\draw [style=diredge, in=-90, out=90] (1.center) to (3.center);
		\draw [in=-90, out=30] (0) to (2.center);
		\draw [style=diredge, in=-90, out=90] (2.center) to (4.center);
		\draw [in=-90, out=154] (6) to (9.center);
		\draw [style=diredge, in=-90, out=90, looseness=0.75] (9.center) to (5.center);
		\draw [in=-90, out=30] (6) to (7.center);
		\draw [style=diredge, in=-90, out=90, looseness=0.75] (7.center) to (8.center);
		\draw [style=diredge] (12) to (0);
		\draw [style=diredge] (11) to (6);
	\end{pgfonlayer}
\end{tikzpicture}}
\endpgfgraphicnamed \qquad\qquad %
\beginpgfgraphicnamed{symmetric_2}
\begin{tikzpicture}[dotpic]
	\begin{pgfonlayer}{nodelayer}
		\node [style=white dot] (0) at (-1.5, 0.25) {};
		\node [style=none] (1) at (-2, -1.25) {};
		\node [style=none] (2) at (-2, -0.25) {};
		\node [style=none] (3) at (-1, -0.25) {};
		\node [style=none] (4) at (0, 0) {$=$};
		\node [style=none] (5) at (2, -0.5) {};
		\node [style=none] (6) at (1, -0.5) {};
		\node [style=none] (7) at (2, -1.25) {};
		\node [style=none] (8) at (-1, -1.25) {};
		\node [style=white dot] (9) at (1.5, 0.25) {};
		\node [style=none] (10) at (1, -1.25) {};
		\node [style=white dot] (11) at (1.5, 1) {};
		\node [style=white dot] (12) at (-1.5, 1) {};
	\end{pgfonlayer}
	\begin{pgfonlayer}{edgelayer}
		\draw [style=rdiredge, in=90, out=-154] (0) to (2.center);
		\draw [in=90, out=-90] (2.center) to (8.center);
		\draw [style=rdiredge, in=90, out=-30] (0) to (3.center);
		\draw [in=90, out=-90] (3.center) to (1.center);
		\draw [style=rdiredge, in=90, out=-154] (9) to (6.center);
		\draw [in=90, out=-90, looseness=0.75] (6.center) to (10.center);
		\draw [style=rdiredge, in=90, out=-30] (9) to (5.center);
		\draw [in=90, out=-90, looseness=0.75] (5.center) to (7.center);
		\draw [style=diredge] (0) to (12);
		\draw [style=diredge] (9) to (11);
	\end{pgfonlayer}
\end{tikzpicture}}
\endpgfgraphicnamed
\]
A \emph{dagger Frobenius algebra} is a Frobenius algebra that
additionally satisfies the following equation:
\[
  \whitecomult = \big(\whitemult\big)^\dag \qquad\qquad \whitecounit = \big(\whiteunit\big)^\dag
\]
\end{definition}

Symbolically, we denote the \emph{multiplication of two points} $p, q \colon I\to A$,  that is, 
\[
\whitemult \circ (p \otimes q) \colon I \to A\ ,
\]
as $p \cdot q$.  Also, since the multiplication fixes its unit, and
the dagger fixes the comultiplication given the multiplication, we
will usually represent our algebras as $(A,\whitemult)$. 

\begin{remark}
  In~\cite{QPL_CPstar}, rather than symmetry, the stronger condition of \emph{normalisability} 
  is used. As this condition implies symmetry for dagger Frobenius algebras~\cite[Theorem~2.6]{QPL_CPstar},
  the results in this paper apply unchanged to normalisable Frobenius algebras.
\end{remark}

We write \FHilb for the category of finite-dimensional Hilbert spaces
and linear maps, with the tensor product as the monoidal structure,
and linear adjoint as the dagger.

\begin{theorem}[\cite{VicaryCstar}]\label{thm:dfa-cstar}
  Symmetric dagger Frobenius algebras in \FHilb are in
  1-to-1 correspondence with finite-dimensional C*-algebras.  
  \qed
\end{theorem}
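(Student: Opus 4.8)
The plan is to exhibit two mutually inverse passages. In one direction, a finite-dimensional C*-algebra is turned into a symmetric dagger Frobenius algebra in \FHilb by carrying out the GNS construction for a trace; in the other, a symmetric dagger Frobenius algebra is turned into a C*-algebra by extracting a $*$-operation from the comultiplication and the dagger.

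\emph{From a C*-algebra to a Frobenius algebra.} Given a finite-dimensional C*-algebra $A$, fix a faithful trace $\tau\colon A\to\C$ (for definiteness the canonical one, restricting to the matrix trace on each simple summand of the Wedderburn decomposition $A\cong\bigoplus_i M_{n_i}(\C)$). Equip the underlying vector space of $A$ with the inner product $\langle a,b\rangle:=\tau(a^\dag b)$; faithfulness and positivity of $\tau$ make this positive definite, so $A$ becomes an object of \FHilb. Let $\whitemult$ and $\whiteunit$ be the multiplication and unit of $A$, and set $\whitecomult:=(\whitemult)^\dag$ and $\whitecounit:=(\whiteunit)^\dag$, so that the dagger condition holds by construction and $\whitecounit$ is the functional sending $a$ to $\langle 1,a\rangle=\tau(a)$. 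Associativity and unitality are inherited from $A$, and coassociativity and counitality follow by applying $\dag$. The Frobenius law is equivalent to nondegeneracy and associativity of the bilinear form $(a,b)\mapsto\tau(ab)$, both of which are immediate, and symmetry of the Frobenius algebra is precisely the tracial property of $\tau$.

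\emph{From a Frobenius algebra to a C*-algebra.} Conversely, let $(A,\whitemult)$ be a symmetric dagger Frobenius algebra in \FHilb. Its underlying algebra $(A,\whitemult,\whiteunit)$ is a finite-dimensional complex algebra, and what must be supplied is a $*$-operation making it a C*-algebra. Write $\cup:=\whitecomult\circ\whiteunit$ and $\cap:=\whitecounit\circ\whitemult$; every Frobenius algebra is self-dual, so these satisfy the snake equations, and symmetry makes the self-duality symmetric. For a point $a\colon I\to A$ put $a^*:=(a^\dag\otimes\id[A])\circ\cup$; since $\dag$ is conjugate-linear on points, $a\mapsto a^*$ is conjugate-linear. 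Feeding $a^*$ into $\cap$ and simplifying with one snake equation collapses $\cap\circ(a^*\otimes b)$ to $a^\dag\circ b=\langle a,b\rangle$, so $\langle a,b\rangle=\tau(a^*b)$ with $\tau:=\whitecounit$; in particular $\tau(a^*a)=\langle a,a\rangle\ge 0$, with equality only for $a=0$. A short diagrammatic computation with the Frobenius law and symmetry then gives $(ab)^*=b^*a^*$, while $(a^*)^*=a$ and $1^*=1$ follow from involutivity of $\dag$ and from (co)unitality. Thus $(A,\whitemult,\whiteunit,{}^*)$ is a finite-dimensional complex $*$-algebra with proper involution. If its Jacobson radical were nonzero, a nonzero element $a$ of it would give, via $\tau(a^*a)=\langle a,a\rangle>0$, a nonzero self-adjoint $b:=a^*a$ again in the radical and hence nilpotent, yet satisfying $\tau(b^{2^k})=\langle b^{2^{k-1}},b^{2^{k-1}}\rangle>0$ for all $k$, so that $b^{2^k}\ne 0$ for all $k$, contradicting nilpotence; hence $A$ is semisimple and $A\cong\bigoplus_i M_{n_i}(\C)$. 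Since a finite-dimensional semisimple complex $*$-algebra with proper involution carries a unique C*-norm, $A$ is a C*-algebra, and $\tau$ is a faithful trace on it.

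\emph{Mutual inverseness and the main obstacle.} The two constructions undo one another, provided the trace is carried along: from $(A,\tau)$ one recovers the same algebra, the same involution (a direct check, using $\langle a,b\rangle=\tau(a^\dag b)$ and the explicit shape of $\cup$), and $\whitecounit=\tau$; from a Frobenius algebra, the identities $\tau=\whitecounit$ and $\langle a,b\rangle=\tau(a^*b)$ together reconstruct the inner product, hence the entire structure, so that the symmetric dagger Frobenius algebras in \FHilb are exactly the finite-dimensional C*-algebras, each carrying the distinguished trace $\whitecounit$. The bookkeeping in the first direction is routine. The step I expect to be the main obstacle is the diagrammatic heart of the second direction: verifying, purely from dagger symmetric monoidal structure, that $a\mapsto(a^\dag\otimes\id[A])\circ\cup$ is a conjugate-linear anti-automorphism with $\tau(a^*a)=\langle a,a\rangle$ — this is exactly where symmetry, as opposed to the mere existence of a Frobenius form, is needed, since a non-symmetric dagger Frobenius algebra yields only a ``twisted'', non-tracial gadget — after which one invokes the classification fact that any finite-dimensional $*$-algebra with proper involution is automatically a C*-algebra.
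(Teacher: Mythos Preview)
The paper does not give a proof of this theorem at all: it is stated with a citation to Vicary and closed immediately with \qed. So there is no argument in the paper to compare your proposal against; the result is imported wholesale from~\cite{VicaryCstar}.

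That said, your sketch is essentially the argument Vicary gives, and the two directions are set up correctly: GNS with respect to a faithful positive trace to produce the dagger Frobenius structure, and in the converse direction the involution $a\mapsto (a^\dag\otimes\id[A])\circ\cup$ together with the observation $\tau(a^*a)=\langle a,a\rangle$ to force semisimplicity and hence a C*-structure. The one point where your write-up is loose is the bijectivity claim. You fix ``for definiteness'' the canonical trace, but a finite-dimensional C*-algebra with more than one simple summand carries a whole family of faithful positive traces $\tau_{\lambda}(\oplus_i a_i)=\sum_i \lambda_i\Tr(a_i)$, and distinct weight vectors $(\lambda_i)$ give symmetric dagger Frobenius structures that are not isomorphic in \FHilb yet all return the same C*-algebra under your inverse construction. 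Your final paragraph implicitly acknowledges this by saying the Frobenius data amounts to ``the C*-algebra together with the distinguished trace $\whitecounit$'', which is the honest statement; to get a literal one-to-one correspondence with C*-algebras alone, as in the theorem, one needs an extra normalisation condition (Vicary uses the ``special'' condition $\whitemult\circ\whitecomult=\id$, and the present paper's companion~\cite{QPL_CPstar} uses normalisability) to single out a canonical trace. Without that, your two passages are inverse only after you have paired each C*-algebra with a chosen trace, which is slightly more than the theorem asserts.
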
 

Recall that \FHilb is a \emph{compact category}~\cite{KellyLaplaza}, that is,  we can coherently pick a \emph{compact structure} on each object as follows.  If ${\mathcal H}$ is a Hilbert space and ${\mathcal H}^*$ is its conjugate space,   the triple:
\[
\left(
{\mathcal H},\
\epsilon_{\mathcal H}\colon \mathbb{C}\to  {\mathcal H}^*\otimes {\mathcal H}:: 1\mapsto \sum_i |i\rangle\otimes|i\rangle,\ 
\eta_{\mathcal H}\colon{\mathcal H}\otimes {\mathcal H}^*\to \mathbb{C}:: |\psi\rangle\otimes|\phi\rangle\mapsto \langle \psi|\phi\rangle
\right)
\]
is a compact structure which can be shown to be independent of the choice of basis--see \cite{CatsII} for more details.  
We depict the maps $\epsilon_{\mathcal H}$ and $\eta_{\mathcal H}$ respectively as:
\[
\beginpgfgraphicnamed{cat_cap_cup1}
\begin{tikzpicture}[dotpic]
	\begin{pgfonlayer}{nodelayer}
		\node [style=none] (0) at (-5.25, 0.25) {};
		\node [style=none] (1) at (-4, 0.25) {};
		\node [style=none] (2) at (3, -0.25) {};
		\node [style=none] (3) at (4.25, -0.25) {};
	\end{pgfonlayer}
	\begin{pgfonlayer}{edgelayer}
		\draw [style=diredge, in=90, out=90, looseness=1.75] (2.center) to (3.center);
		\draw [style=diredge, in=-90, out=-90, looseness=1.75] (0.center) to (1.center);
	\end{pgfonlayer}
\end{tikzpicture}}
\endpgfgraphicnamed
\]
and compactness means that they satisfy:
\[
\beginpgfgraphicnamed{cat_cap_cup2}
\begin{tikzpicture}[dotpic]
	\begin{pgfonlayer}{nodelayer}
		\node [style=none] (0) at (-1.5, 0.75) {};
		\node [style=none] (1) at (0, 0.75) {};
		\node [style=none] (2) at (4, 0.75) {};
		\node [style=none] (3) at (8, 0.75) {};
		\node [style=none] (4) at (-4, 0) {};
		\node [style=none] (5) at (-2.75, 0) {};
		\node [style=none] (6) at (-1.5, 0) {};
		\node [style=none] (7) at (-0.75, 0) {$=$};
		\node [style=none] (8) at (4, 0) {};
		\node [style=none] (9) at (5.25, 0) {};
		\node [style=none] (10) at (6.5, 0) {};
		\node [style=none] (11) at (7.25, 0) {$=$};
		\node [style=none] (12) at (-4, -0.75) {};
		\node [style=none] (13) at (0, -0.75) {};
		\node [style=none] (14) at (6.5, -0.75) {};
		\node [style=none] (15) at (8, -0.75) {};
	\end{pgfonlayer}
	\begin{pgfonlayer}{edgelayer}
		\draw [style=diredge, in=-90, out=-90, looseness=1.75] (8.center) to (9.center);
		\draw [style=diredge, in=-90, out=-90, looseness=1.75] (5.center) to (6.center);
		\draw [style=diredge, in=90, out=90, looseness=1.75] (9.center) to (10.center);
		\draw (6.center) to (0.center);
		\draw (10.center) to (14.center);
		\draw [style=diredge] (3.center) to (15.center);
		\draw (2.center) to (8.center);
		\draw [style=diredge] (13.center) to (1.center);
		\draw [style=diredge, in=90, out=90, looseness=1.75] (4.center) to (5.center);
		\draw (12.center) to (4.center);
	\end{pgfonlayer}
\end{tikzpicture}}
\endpgfgraphicnamed \ .
\]
 
Each symmetric dagger Fobenius algebra also canonically induces a `self-dual' compact structure.  The cups and caps of this compact structure are given by:
\[
\beginpgfgraphicnamed{cap_cup1}
\begin{tikzpicture}[dotpic]
	\begin{pgfonlayer}{nodelayer}
		\node [style={white dot}] (0) at (-5.25, -0.25) {};
		\node [style={white dot}] (1) at (-2.75, -0) {};
		\node [style={white dot}] (2) at (-2.75, -0.75) {};
		\node [style=none] (3) at (-5.75, 0.5) {};
		\node [style=none] (4) at (-4.75, 0.5) {};
		\node [style=none] (5) at (-3.25, 0.75) {};
		\node [style=none] (6) at (-2.25, 0.75) {};
		\node [style=none] (7) at (-4, -0) {$:=$};
		\node [style=none] (8) at (5.75, -0.75) {};
		\node [style={white dot}] (9) at (5.25, 0) {};
		\node [style=none] (10) at (4.75, -0.75) {};
		\node [style=none] (11) at (4, 0) {$:=$};
		\node [style={white dot}] (12) at (2.75, 0.25) {};
		\node [style={white dot}] (13) at (5.25, 0.75) {};
		\node [style=none] (14) at (3.25, -0.5) {};
		\node [style=none] (15) at (2.25, -0.5) {};
	\end{pgfonlayer}
	\begin{pgfonlayer}{edgelayer}
		\draw [style=diredge, in=-90, out=165, looseness=1.25] (0) to (3.center);
		\draw [style=diredge, in=-90, out=15, looseness=1.25] (0) to (4.center);
		\draw [style=diredge] (2) to (1);
		\draw [style=diredge, bend left=15, looseness=1.00] (1) to (5.center);
		\draw [style=diredge, bend right=15, looseness=1.00] (1) to (6.center);
		\draw [style=diredge, in=-165, out=90, looseness=1.25] (15.center) to (12);
		\draw [style=diredge, in=-15, out=90, looseness=1.25] (14.center) to (12);
		\draw [style=diredge] (9) to (13);
		\draw [style=diredge, bend left=15, looseness=1.00] (10.center) to (9);
		\draw [style=diredge, bend right=15, looseness=1.00] (8.center) to (9);
	\end{pgfonlayer}
\end{tikzpicture}}
\endpgfgraphicnamed \ ,
\]
and one easily verifies that it follows from the axioms of a symmetric Frobenius algebra that the required `yanking' conditions hold: 
\[
\beginpgfgraphicnamed{cap_cup2}
\begin{tikzpicture}[dotpic]
	\begin{pgfonlayer}{nodelayer}
		\node [style=none] (0) at (-3.5, 0) {};
		\node [style={white dot}] (1) at (-3, -0.75) {};
		\node [style=none] (2) at (-1.5, 0) {};
		\node [style={white dot}] (3) at (-2, 0.75) {};
		\node [style=none] (4) at (-2.5, 0) {};
		\node [style=none] (5) at (-3.5, 0.75) {};
		\node [style=none] (6) at (-1.5, -0.75) {};
		\node [style=none] (7) at (0, 0.75) {};
		\node [style=none] (8) at (0, -0.75) {};
		\node [style=none] (9) at (-0.75, 0) {$=$};
		\node [style={white dot}] (10) at (3, -0.75) {};
		\node [style=none] (11) at (2.5, 0) {};
		\node [style={white dot}] (12) at (2, 0.75) {};
		\node [style=none] (13) at (3.5, 0) {};
		\node [style=none] (14) at (1.5, 0) {};
		\node [style=none] (15) at (1.5, -0.75) {};
		\node [style=none] (16) at (3.5, 0.75) {};
		\node [style=none] (17) at (0.75, 0) {$=$};
	\end{pgfonlayer}
	\begin{pgfonlayer}{edgelayer}
		\draw [in=-90, out=165, looseness=1.25] (1) to (0.center);
		\draw [style=diredge, in=-165, out=90, looseness=1.25] (4.center) to (3);
		\draw [style=diredge, in=-15, out=90, looseness=1.25] (2.center) to (3);
		\draw [style=diredge] (0.center) to (5.center);
		\draw (6.center) to (2.center);
		\draw [style=diredge] (8.center) to (7.center);
		\draw [in=-90, out=15, looseness=1.25] (1) to (4.center);
		\draw [in=-90, out=15, looseness=1.25] (10) to (13.center);
		\draw [style=diredge, in=-15, out=90, looseness=1.25] (11.center) to (12);
		\draw [style=diredge, in=-165, out=90, looseness=1.25] (14.center) to (12);
		\draw [style=diredge] (13.center) to (16.center);
		\draw (15.center) to (14.center);
		\draw [in=-90, out=165, looseness=1.25] (10) to (11.center);
	\end{pgfonlayer}
\end{tikzpicture}}
\endpgfgraphicnamed \ . 
\]


\section{Abstract projections}

A \emph{projection} of a C*-algebra is a *-idempotent.   In this
section we will recast this definition in light of
Theorem~\ref{thm:dfa-cstar}, that is, we will identify what these
projections are when a C*-algebra is presented as a symmetric dagger
Frobenius algebra in \FHilb, as in~\cite{VicaryCstar}.   

We claim that the projections of a C*-algebra arise as points $p\colon
I\to {\mathcal H}$ satisfying: 
\begin{equation}\label{def:projection}
\beginpgfgraphicnamed{projection}
\begin{tikzpicture}[dotpic]
	\begin{pgfonlayer}{nodelayer}
		\node [style=none] (0) at (-1.5, 0) {$=$};
		\node [style=none] (1) at (-3.5, 1) {};
		\node [style=white dot] (2) at (-3.5, 0) {};
		\node [style=point] (3) at (-2.75, -1) {$p$};
		\node [style=point] (4) at (-4.25, -1) {$p$};
		\node [style=none] (5) at (0, 1) {};
		\node [style=point] (6) at (0, -1) {$p$};
		\node [style=none] (7) at (1.5, 0) {$=$};
		\node [style=copoint] (8) at (3, 0.5) {$p$};
		\node [style=none] (9) at (4.5, 1) {};
		\node [style=white dot] (10) at (3.75, -0.5) {};
		\node [style=white dot] (11) at (3.75, -1.25) {};
	\end{pgfonlayer}
	\begin{pgfonlayer}{edgelayer}
		\draw [style=diredge, bend left] (4) to (2);
		\draw [style=diredge, bend right] (3) to (2);
		\draw [style=diredge] (2) to (1.center);
		\draw [style=diredge] (6) to (5.center);
		\draw [style=diredge] (11) to (10);
		\draw [style=diredge, in=-90, out=150] (10) to (8);
		\draw [style=diredge, in=-90, out=30] (10) to (9.center);
	\end{pgfonlayer}
\end{tikzpicture}}
\endpgfgraphicnamed
\end{equation}
 where the symmetric dagger Frobenius algebra is the one induced by
 Theorem \ref{thm:dfa-cstar}.  Note that the first condition is simply
 idempotence of $\whitemult$-multiplication of points, and the second
 one is \emph{self-conjugateness} with respect to the compact
 structure induced by the symmetric dagger Frobenius algebra.  
Symbolically, we denote this conjugate of $p$ as $p^*$. 

A C*-algebra is realised as a symmetric dagger Frobenius algebra as follows.  Each finite dimensional C*-algebra decomposes as a direct sum of matrix algebras. These can then be represented as \em endomorphism monoids \em ${\rm End}({\mathcal H})$ in \FHilb, which are triples of the following form:
\[
\Bigl(
{\mathcal H}^*\otimes {\mathcal H}\ ,\ 
1_{{\mathcal H}^*}\otimes \eta_{\mathcal H} \otimes 1_{{\mathcal H}}:({\mathcal H}^*\otimes {\mathcal H})\otimes ({\mathcal H}^*\otimes {\mathcal H})\to {\mathcal H}^*\otimes {\mathcal H}\ ,\ 
\epsilon_{\mathcal H}:\mathbb{C}\to {\mathcal H}^*\otimes {\mathcal H}
\Bigr)\ ,
\]
Diagrammatically, for an endomorphism monoid the multiplication and its unit respectively are: 
\[
\beginpgfgraphicnamed{pants_alg}
\begin{tikzpicture}[dotpic]
	\begin{pgfonlayer}{nodelayer}
		\node [style=none] (0) at (-1.25, -1.25) {};
		\node [style=none] (1) at (-5.75, -1.25) {};
		\node [style=none] (2) at (-2.75, 1.25) {};
		\node [style=none] (3) at (-4.25, 1.25) {};
		\node [style=none] (4) at (3.75, 0.25) {};
		\node [style=none] (5) at (5.25, 0.25) {};
		\node [style=none] (6) at (-4.25, -1.25) {};
		\node [style=none] (7) at (-2.75, -1.25) {};
	\end{pgfonlayer}
	\begin{pgfonlayer}{edgelayer}
		\draw [style=diredge, in=-90, out=90] (0.center) to (2.center);
		\draw [style=diredge, in=90, out=-90] (3.center) to (1.center);
		\draw [style=diredge, in=-90, out=-90, looseness=2.00] (4.center) to (5.center);
		\draw [style=diredge, in=90, out=90, looseness=2.00] (6.center) to (7.center);
	\end{pgfonlayer}
\end{tikzpicture}}
\endpgfgraphicnamed  
\]
The elements $\rho: \mathbb{C}^n\to \mathbb{C}^n$ of the matrix algebra are then represented by underlying points:
\[
p_\rho\ :=\ \raisebox{-1.2mm}{%
\beginpgfgraphicnamed{CJ}
\begin{tikzpicture}[dotpic]
	\begin{pgfonlayer}{nodelayer}
		\node [style=none] (0) at (-1, 0) {};
		\node [style=square box] (1) at (1, 0.25) {$\rho$};
		\node [style=none] (2) at (-1, 1.25) {};
		\node [style=none] (3) at (1, 1.25) {};
	\end{pgfonlayer}
	\begin{pgfonlayer}{edgelayer}
		\draw [style=diredge, cross, in=-90, out=-90, looseness=1.50] (0.center) to (1);
		\draw (2.center) to (0.center);
		\draw [style=diredge] (1) to (3.center);
	\end{pgfonlayer}
\end{tikzpicture}}
\endpgfgraphicnamed}\ : \ \mathbb{C}\to (\mathbb{C}^n)^*\otimes \mathbb{C}^n
\]
By compactness, each point of type $\mathbb{C}\to
(\mathbb{C}^n)^*\otimes \mathbb{C}^n$ is of this form.  By
Theorem~\ref{thm:dfa-cstar} we know that all symmetric dagger
Frobenius algebras in \FHilb arise in this manner. 

We can now verify the above stated claim on how the projections of a
C*-algebra arise in this representation.  For these points  $p_\rho$
the conditions of equation~\eqref{def:projection} respectively become:   
\[
\beginpgfgraphicnamed{pants_alg_comp}
\begin{tikzpicture}[dotpic]
	\begin{pgfonlayer}{nodelayer}
		\node [style=none] (0) at (-1.5, -1.25) {};
		\node [style=none] (1) at (-5.5, -1.5) {};
		\node [style=none] (2) at (-2.75, 1) {};
		\node [style=none] (3) at (-4.25, 1) {};
		\node [style=none] (4) at (-4.25, -1.25) {};
		\node [style=none] (5) at (-2.75, -1.5) {};
		\node [style=none] (6) at (-5.5, -1.5) {};
		\node [style=square box] (7) at (-4.25, -1.25) {$\rho$};
		\node [style=square box] (8) at (-1.5, -1.25) {$\rho$};
		\node [style=none] (9) at (-2.75, -1.5) {};
		\node [style=none] (10) at (-6.25, -0.5) {$=$};
		\node [style=square box] (11) at (-7.75, -1.25) {$\rho$};
		\node [style=none] (12) at (-9, -1.5) {};
		\node [style=none] (13) at (-7.75, -1.25) {};
		\node [style=none] (14) at (-9, -1.5) {};
		\node [style=none] (15) at (-9, 1) {};
		\node [style=square box] (16) at (-7.75, 0) {$\rho$};
		\node [style=none] (17) at (-7.75, 1) {};
		\node [style=none] (18) at (-0.5, -0.5) {$=$};
		\node [style=none] (19) at (0.5, -1) {};
		\node [style=none] (20) at (1.75, -0.75) {};
		\node [style=none] (21) at (1.75, 0.25) {};
		\node [style=square box] (22) at (1.75, -0.75) {$\rho$};
		\node [style=none] (23) at (0.5, 0.25) {};
		\node [style=none] (24) at (0.5, -1) {};
		\node [style=none] (25) at (3, -0.5) {$=$};
		\node [style=none] (26) at (3.75, 1) {};
		\node [style=none] (27) at (5.25, 0.75) {};
		\node [style=none] (28) at (7.75, 1) {};
		\node [style=none] (29) at (6.5, 0.5) {};
		\node [style=none] (30) at (6.5, -1.5) {};
		\node [style=none] (31) at (5.25, -1.5) {};
		\node [style=none] (32) at (7.75, 1) {};
		\node [style=none] (33) at (3.75, 1) {};
		\node [style=square box] (34) at (5.25, 0.75) {$\rho^\dagger$};
		\node [style=none] (35) at (6.5, -1.5) {};
		\node [style=none] (36) at (5.25, -1.5) {};
		\node [style=none] (37) at (6.5, -1.5) {};
		\node [style=none] (38) at (7.75, 1.75) {};
		\node [style=none] (39) at (6.5, 1.75) {};
		\node [style=none] (40) at (8.5, -0.5) {$=$};
		\node [style=none] (41) at (9.5, 0.25) {};
		\node [style=none] (42) at (10.75, -0.75) {};
		\node [style=none] (43) at (9.5, -1) {};
		\node [style=square box] (44) at (10.75, -0.75) {$\rho^\dagger$};
		\node [style=none] (45) at (9.5, -1) {};
		\node [style=none] (46) at (10.75, 0.25) {};
	\end{pgfonlayer}
	\begin{pgfonlayer}{edgelayer}
		\draw [style=diredge, in=-90, out=90] (0.center) to (2.center);
		\draw [in=90, out=-90] (3.center) to (1.center);
		\draw [in=90, out=90, looseness=2.00] (4.center) to (5.center);
		\draw [style=diredge, cross, in=-90, out=-90, looseness=1.50] (6.center) to (7);
		\draw [style=diredge, cross, in=-90, out=-90, looseness=1.50] (9.center) to (8);
		\draw [in=90, out=-90] (15.center) to (12.center);
		\draw [style=diredge, cross, in=-90, out=-90, looseness=1.50] (14.center) to (11);
		\draw [style=diredge] (13.center) to (16);
		\draw [style=diredge] (16) to (17.center);
		\draw [in=90, out=-90] (23.center) to (24.center);
		\draw [style=diredge, cross, in=-90, out=-90, looseness=1.50] (19.center) to (22);
		\draw [style=diredge] (20.center) to (21.center);
		\draw [in=90, out=-90] (26.center) to (31.center);
		\draw [in=-90, out=90] (30.center) to (32.center);
		\draw [style=diredge, in=-90, out=-90, looseness=2.00] (29.center) to (34);
		\draw [cross, in=90, out=90, looseness=1.50] (27.center) to (33.center);
		\draw [cross, in=-90, out=-90, looseness=1.50] (36.center) to (35.center);
		\draw [style=none] (39.center) to (29.center);
		\draw [style=diredge] (32.center) to (38.center);
		\draw [in=90, out=-90] (41.center) to (43.center);
		\draw [style=diredge, cross, in=-90, out=-90, looseness=1.50] (45.center) to (44);
		\draw [style=diredge] (42.center) to (46.center);
	\end{pgfonlayer}
\end{tikzpicture}}
\endpgfgraphicnamed 
\]
that is, using again compactness, $\rho\circ\rho=\rho=\rho^\dagger$, i.e.~\emph{idempotence} and \emph{self-adjointness}.  

We can now generalise the definition of projection to points $p:I\to A$  to arbitrary symmetric dagger Frobenius algebras $(A,\whitemult)$ in any dagger symmetric monoidal category. 

\begin{definition}
  A \emph{projection} of a symmetric dagger Frobenius algebra 
  $(A,\whitemult)$ in a dagger symmetric monoidal category $\cat{C}$ is
  a morphism $p\colon I\to A$ satisfying equations~\eqref{def:projection}. 
\end{definition}

The next section studies the structure of these generalised
projections of abstract C*-algebras.

Before that, we compare abstract projections to \emph{copyable
points}. These played a key role for commutative abstract C*-algebras,
because they correspond to the elements of an orthonormal basis that
determines the algebra~\cite{CPV}. However, as we will now see, in the
noncommutative case, there simply do not exist enough copyable points
(whereas the projections do have interesting structure, as the next
section shows).
Recall that a point $x \colon I \to A$ is \emph{copyable} when the
following equation is satisfied.
\begin{equation}\label{eq:copyable}
\beginpgfgraphicnamed{copyable}
\begin{tikzpicture}[dotpic]
	\begin{pgfonlayer}{nodelayer}
		\node [style=none] (0) at (-1.5, 0) {$=$};
		\node [style=none] (1) at (-2.75, 1) {};
		\node [style=white dot] (2) at (-3.5, 0) {};
		\node [style=point] (3) at (-3.5, -1) {$x$};
		\node [style=none] (4) at (0, 1) {};
		\node [style=point] (5) at (0, -1) {$x$};
		\node [style=none] (6) at (1, 1) {};
		\node [style=point] (7) at (1, -1) {$x$};
		\node [style=none] (8) at (-4.25, 1) {};
	\end{pgfonlayer}
	\begin{pgfonlayer}{edgelayer}
		\draw [style=diredge] (3) to (2);
		\draw [style=diredge, in=-90, out=15] (2) to (1.center);
		\draw [style=diredge] (5) to (4.center);
		\draw [style=diredge] (7) to (6.center);
		\draw [style=diredge, in=-90, out=165] (2) to (8.center);
	\end{pgfonlayer}
\end{tikzpicture}}
\endpgfgraphicnamed
\end{equation}

\begin{lemma}
  Copyable points of symmetric dagger Frobenius algebras are central.
\end{lemma}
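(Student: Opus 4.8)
Given a copyable point $x\colon I\to A$ of a symmetric dagger Frobenius algebra $(A,\whitemult)$, the plan is to show it is \emph{central}, that is, that left and right multiplication by $x$ agree: $\whitemult\circ(x\otimes 1_A)=\whitemult\circ(1_A\otimes x)$ as endomorphisms of $A$. The idea is that both sides factor through the unit object $I$ in the same way -- each equals $x$ precomposed with a scalar-valued map $A\to I$ -- and that the two scalar maps coincide by the symmetry hypothesis on the algebra. The only ingredients are the Frobenius law, the counit law, copyability of $x$, and symmetry of the algebra.

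First I would rewrite right multiplication by $x$. After $\whitemult\circ(1_A\otimes x)$ I insert a cancelling comultiplication--counit pair, using the counit law $(\whitecounit\otimes 1_A)\circ\whitecomult=1_A$; I then push the comultiplication past the multiplication using the Frobenius law; and finally I apply copyability in the form $\whitecomult\circ x=x\otimes x$. The resulting diagram copies $x$, feeds one copy into the multiplication with the incoming wire, and discards the other copy with $\whitecounit$, which gives
\[
  \whitemult\circ(1_A\otimes x)\;=\;x\circ\bigl(\whitecounit\circ\whitemult\circ(1_A\otimes x)\bigr).
\]
Running the mirror-image argument, with the other Frobenius law and the other counit law, gives likewise $\whitemult\circ(x\otimes 1_A)=x\circ\bigl(\whitecounit\circ\whitemult\circ(x\otimes 1_A)\bigr)$.

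It then remains to compare the two scalar factors $\whitecounit\circ\whitemult\circ(1_A\otimes x)$ and $\whitecounit\circ\whitemult\circ(x\otimes 1_A)$. Since $1_A\otimes x$ and $x\otimes 1_A$ differ precisely by the symmetry isomorphism $\sigma_{A,A}$, their equality is exactly the assertion that the Frobenius form $\whitecounit\circ\whitemult$ is invariant under the swap -- which is the symmetry axiom for $(A,\whitemult)$. Combining this with the two displays above yields $\whitemult\circ(x\otimes 1_A)=\whitemult\circ(1_A\otimes x)$, as required.

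I expect the only real piece of ingenuity to be the first step: one has to notice that a cancelling comultiplication--counit pair should be inserted precisely so that the Frobenius law has something to act on, after which copyability can be invoked on a copy of $x$ that is immediately thrown away -- this is what produces the ``rank-one'' factorisation of left and right multiplication through $I$. Once that is in place the argument is essentially forced, and the single point at which the symmetry axiom is used makes transparent why the statement would fail for Frobenius algebras that are not symmetric. No compactness, enrichment, or biproducts enter anywhere; indeed, not even the dagger is needed.
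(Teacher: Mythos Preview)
Your argument is correct and is essentially the paper's own proof unpacked symbolically: the paper presents a short graphical equation chain in which copyability is used to duplicate $x$, the Frobenius/counit laws rearrange the diagram, and the single middle step invokes symmetry of the Frobenius form $\whitecounit\circ\whitemult$---exactly the structure you describe. Your explicit ``rank-one'' factorisation $\whitemult\circ(1_A\otimes x)=x\circ\bigl(\whitecounit\circ\whitemult\circ(1_A\otimes x)\bigr)$ is a nice way to articulate what the picture is doing, and your observation that neither compactness nor the dagger is needed is accurate.
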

\begin{proof}
  Graphically:
  \[
\beginpgfgraphicnamed{copyable_central}
\begin{tikzpicture}[dotpic]
	\begin{pgfonlayer}{nodelayer}
		\node [style=point] (0) at (-7.75, -1) {$x$};
		\node [style=white dot] (1) at (-7.25, 0) {};
		\node [style=none] (2) at (-7.25, 1) {};
		\node [style=none] (3) at (-6.75, -1) {};
		\node [style=none] (4) at (-6, 0) {$=$};
		\node [style=white dot] (5) at (-4.75, 0) {};
		\node [style=none] (6) at (-5.25, 1) {};
		\node [style=point] (7) at (-4.75, -1) {$x$};
		\node [style=white dot] (8) at (-4, 0.75) {};
		\node [style=none] (9) at (-3.5, -1) {};
		\node [style=none] (10) at (-3.5, 0.25) {};
		\node [style=white dot] (11) at (0, 0.75) {};
		\node [style=point] (12) at (-0.5, 0) {$x$};
		\node [style=none] (13) at (-2.5, 0) {$=$};
		\node [style=point] (14) at (-1.5, 0) {$x$};
		\node [style=none] (15) at (0.5, -1) {};
		\node [style=none] (16) at (-1.5, 1) {};
		\node [style=none] (17) at (0.5, 0.25) {};
		\node [style=none] (18) at (2, -1) {};
		\node [style=none] (19) at (1.25, 0) {$=$};
		\node [style=none] (20) at (4, 1) {};
		\node [style=white dot] (21) at (2.5, 0.75) {};
		\node [style=point] (22) at (4, 0) {$x$};
		\node [style=none] (23) at (2, 0.25) {};
		\node [style=point] (24) at (3, 0) {$x$};
		\node [style=none] (25) at (6, -1) {};
		\node [style=none] (26) at (9.75, 1) {};
		\node [style=white dot] (27) at (7.25, 0) {};
		\node [style=white dot] (28) at (9.75, 0) {};
		\node [style=none] (29) at (8.5, 0) {$=$};
		\node [style=point] (30) at (7.25, -1) {$x$};
		\node [style=white dot] (31) at (6.5, 0.75) {};
		\node [style=none] (32) at (7.75, 1) {};
		\node [style=none] (33) at (5, 0) {$=$};
		\node [style=point] (34) at (10.25, -1) {$x$};
		\node [style=none] (35) at (9.25, -1) {};
		\node [style=none] (36) at (6, 0.25) {};
	\end{pgfonlayer}
	\begin{pgfonlayer}{edgelayer}
		\draw [style=diredge, in=-150, out=90] (0) to (1);
		\draw [style=diredge, in=-33, out=90] (3.center) to (1);
		\draw [style=diredge] (1) to (2.center);
		\draw [style=diredge, in=-90, out=147] (5) to (6.center);
		\draw [style=diredge] (7) to (5);
		\draw [style=diredge, in=-165, out=45] (5) to (8);
		\draw (9.center) to (10.center);
		\draw [style=diredge, in=-15, out=90] (10.center) to (8);
		\draw [style=diredge, in=-165, out=90] (12) to (11);
		\draw (15.center) to (17.center);
		\draw [style=diredge, in=-15, out=90] (17.center) to (11);
		\draw [style=diredge] (14) to (16.center);
		\draw [style=diredge, in=-15, out=90] (24) to (21);
		\draw (18.center) to (23.center);
		\draw [style=diredge, in=-165, out=90] (23.center) to (21);
		\draw [style=diredge] (22) to (20.center);
		\draw [style=diredge, in=-30, out=90] (34) to (28);
		\draw [style=diredge, in=-147, out=90] (35.center) to (28);
		\draw [style=diredge] (28) to (26.center);
		\draw [style=diredge, in=-90, out=33] (27) to (32.center);
		\draw [style=diredge] (30) to (27);
		\draw [style=diredge, in=-15, out=135] (27) to (31);
		\draw (25.center) to (36.center);
		\draw [style=diredge, in=-165, out=90] (36.center) to (31);
	\end{pgfonlayer}
\end{tikzpicture}}
\endpgfgraphicnamed.
  \]
  The middle equation follows from symmetry of $(A,\whitemult)$.
\end{proof}

Let us examine what this implies for the example of $A=(\mathbb{C}^n)^*
\otimes \mathbb{C}^n$ in $\FHilb$ above. Equivalently, we may speak
about $n$-by-$n$ matrices, so that $\whitemult$ becomes actual matrix
multiplication. Because it is well known that the central elements of
matrix algebras are precisely the scalars, any copyable point is
simply a scalar by the previous lemma. But substituting back
into~\eqref{eq:copyable} shows that the only scalar satisfying this
equation is $0$ (unless $n=1$). That is, no noncommutative symmetric dagger 
Frobenius algebra in \FHilb can have nontrivial copyable points. This
explains why we prefer to work with (abstract) projections.

\section{Quantum logics for abstract C*-algebras} 

\begin{definition}
A \emph{zero projection} of $(A,\whitemult)$ is a projection $0 \colon I\to A$  satisfying 
\[
0\cdot p=0
\] 
for all other projections $p\colon I\to A$  of $(A,\whitemult)$. 
\end{definition}

We will assume that an algebra always has a zero projection.

\begin{definition}
An \emph{orthogonality relation} is a binary relation satisfying the following axioms:
\begin{itemize}
\item \emph{symmetry}: $a\perp b \;\Longleftrightarrow\; b\perp a$\ ;
\item \emph{antireflexivity above zero}: $a\perp a \;\Longrightarrow a=0$\ ;
\item \emph{downward closure}: $a\leq a',\; b\leq b',\; a'\perp b' \;\Longrightarrow a\perp b$\,.
\end{itemize}
\end{definition}

\begin{lemma}\label{lemFrob}
We have
\[
\beginpgfgraphicnamed{starproof5}
\begin{tikzpicture}[dotpic]
	\begin{pgfonlayer}{nodelayer}
		\node [style=none] (0) at (-1, 1) {};
		\node [style=none] (1) at (0.5, 1) {};
		\node [style=white dot] (2) at (-0.25, 0) {};
		\node [style=white dot] (3) at (0.5, -0.5) {};
		\node [style=none] (4) at (1.75, 1) {};
	\end{pgfonlayer}
	\begin{pgfonlayer}{edgelayer}
		\draw [style=diredge, in=-90, out=150] (2) to (0.center);
		\draw [style=diredge, in=-90, out=30] (2) to (1.center);
		\draw [style=diredge, bend left=15, looseness=1.25] (3) to (2);
		\draw [style=diredge, bend right] (3) to (4.center);
	\end{pgfonlayer}
\end{tikzpicture}}
\endpgfgraphicnamed = %
\beginpgfgraphicnamed{starproof6}
\begin{tikzpicture}[dotpic]
	\begin{pgfonlayer}{nodelayer}
		\node [style=none] (0) at (-1, 1.25) {};
		\node [style=none] (1) at (0.5, 1.25) {};
		\node [style=white dot] (2) at (1.25, 0) {};
		\node [style=white dot] (3) at (2.25, 0.5) {};
		\node [style=white dot] (4) at (1.25, -1) {};
		\node [style=none] (5) at (2.25, 1.25) {};
	\end{pgfonlayer}
	\begin{pgfonlayer}{edgelayer}
		\draw [style=diredge, in=-150, out=0] (2) to (3);
		\draw [style=diredge, in=-90, out=180, looseness=0.75] (4) to (0.center);
		\draw [style=diredge, in=-90, out=180] (2) to (1.center);
		\draw [style=diredge, in=-30, out=0] (4) to (3);
		\draw [style=diredge] (3) to (5.center);
	\end{pgfonlayer}
\end{tikzpicture}}
\endpgfgraphicnamed 
\]
\end{lemma}
\begin{proof}
First, note the following stardard equation for Frobenius algebras:
\ctikzfig{fa_alt}
Then, the result follows from associativity:
\[ %
\beginpgfgraphicnamed{starproof5all}
\begin{tikzpicture}[dotpic]
	\begin{pgfonlayer}{nodelayer}
		\node [style=none] (0) at (-5.5, 1.25) {};
		\node [style=none] (1) at (-4, 1.25) {};
		\node [style={white dot}] (2) at (-4.75, 0.25) {};
		\node [style={white dot}] (3) at (-4, -0.25) {};
		\node [style=none] (4) at (-2.75, 1.25) {};
		\node [style={white dot}] (5) at (-4, -1.25) {};
		\node [style=none] (6) at (-6.25, -0) {$=$};
		\node [style={white dot}] (7) at (-8.25, -0.5) {};
		\node [style={white dot}] (8) at (-9, -0) {};
		\node [style=none] (9) at (-7, 1) {};
		\node [style=none] (10) at (-9.75, 1) {};
		\node [style=none] (11) at (-8.25, 1) {};
		\node [style={white dot}] (12) at (0, -1.25) {};
		\node [style=none] (13) at (1.5, 1.25) {};
		\node [style=none] (14) at (-2, -0) {$=$};
		\node [style={white dot}] (15) at (0.75, 0.25) {};
		\node [style=none] (16) at (-1.25, 1.25) {};
		\node [style={white dot}] (17) at (0, -0.25) {};
		\node [style=none] (18) at (0, 1.25) {};
		\node [style={white dot}] (19) at (5.25, -0.75) {};
		\node [style={white dot}] (20) at (6.25, 0.75) {};
		\node [style={white dot}] (21) at (5.25, 0.25) {};
		\node [style=none] (22) at (3.25, 1.5) {};
		\node [style=none] (23) at (4.5, 1.5) {};
		\node [style=none] (24) at (6.25, 1.5) {};
		\node [style=none] (25) at (2.25, -0) {$=$};
		\node [style={white dot}] (26) at (5.25, -1.5) {};
		\node [style={white dot}] (27) at (10.5, -0.25) {};
		\node [style={white dot}] (28) at (10.5, -1.25) {};
		\node [style=none] (29) at (11.5, 1) {};
		\node [style=none] (30) at (8.5, 1) {};
		\node [style=none] (31) at (7.5, -0) {$=$};
		\node [style=none] (32) at (9.75, 1) {};
		\node [style={white dot}] (33) at (11.5, 0.25) {};
	\end{pgfonlayer}
	\begin{pgfonlayer}{edgelayer}
		\draw [style=diredge, in=-90, out=150, looseness=1.00] (2) to (0.center);
		\draw [style=diredge, in=-90, out=30, looseness=1.00] (2) to (1.center);
		\draw [style=diredge, bend left=15, looseness=1.25] (3) to (2);
		\draw [style=diredge, bend right, looseness=1.00] (3) to (4.center);
		\draw [style=diredge] (5) to (3);
		\draw [style=diredge, in=-90, out=150, looseness=1.00] (8) to (10.center);
		\draw [style=diredge, in=-90, out=30, looseness=1.00] (8) to (11.center);
		\draw [style=diredge, bend left=15, looseness=1.25] (7) to (8);
		\draw [style=diredge, bend right, looseness=1.00] (7) to (9.center);
		\draw [style=diredge, in=-90, out=30, looseness=1.00] (15) to (13.center);
		\draw [style=diredge, in=-90, out=150, looseness=1.00] (15) to (18.center);
		\draw [style=diredge, bend right=15, looseness=1.25] (17) to (15);
		\draw [style=diredge, bend left, looseness=1.00] (17) to (16.center);
		\draw [style=diredge] (12) to (17);
		\draw [style=diredge, in=-150, out=0, looseness=1.00] (21) to (20);
		\draw [style=diredge, in=-90, out=180, looseness=1.00] (19) to (22.center);
		\draw [style=diredge, in=-90, out=180, looseness=1.00] (21) to (23.center);
		\draw [style=diredge, in=-30, out=0, looseness=1.00] (19) to (20);
		\draw [style=diredge] (20) to (24.center);
		\draw [style=diredge] (26) to (19);
		\draw [style=diredge, in=-150, out=0, looseness=1.00] (27) to (33);
		\draw [style=diredge, in=-90, out=180, looseness=1.00] (28) to (30.center);
		\draw [style=diredge, in=-90, out=180, looseness=1.00] (27) to (32.center);
		\draw [style=diredge, in=-30, out=0, looseness=1.00] (28) to (33);
		\draw [style=diredge] (33) to (29.center);
	\end{pgfonlayer}
\end{tikzpicture}}
\endpgfgraphicnamed. \]
\end{proof} 

\begin{lemma}\label{lemsym}
For projections we have:
\begin{itemize}
\item[{\rm (i)}] $(p\cdot q)^* = q^*\cdot p^*$\ ;
\item[{\rm (ii)}] If $p\cdot q$ is a projection, then $p \cdot q = q
  \cdot p$.
\end{itemize}
\end{lemma}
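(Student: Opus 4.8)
The plan is to establish (i) for \emph{arbitrary} points $p,q\colon I\to A$ — idempotence is not needed — and then obtain (ii) in two lines. The whole of (i) rests on one structural fact about the self-dual compact structure that $(A,\whitemult)$ induces: conjugation with respect to it sends the multiplication to the multiplication precomposed with the symmetry, $(\whitemult)^{*}=\whitemult\circ\sigma_{A,A}$. This is the abstract shadow of the identity $(\rho\sigma)^{\dagger}=\sigma^{\dagger}\rho^{\dagger}$ already observed for matrix algebras, the ``swap'' being exactly what produces the reversal of $p$ and $q$.

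First I would unfold the conjugate: since $p^{*}$ is obtained by bending the wire of $p$ with the Frobenius cup and cap, $p^{*}=\bigl(p^{\dagger}\otimes 1_{A}\bigr)\circ(\whitecomult\circ\whiteunit)$, and similarly for $q$. Using that $\dagger$ is contravariant together with $\whitecomult=(\whitemult)^{\dagger}$ gives $(p\cdot q)^{\dagger}=(p^{\dagger}\otimes q^{\dagger})\circ\whitecomult$. Substituting this into $(p\cdot q)^{*}=\bigl((p\cdot q)^{\dagger}\otimes 1_{A}\bigr)\circ(\whitecomult\circ\whiteunit)$ and merging the two comultiplications by coassociativity — or, equivalently, by the Frobenius identity recorded in Lemma~\ref{lemFrob} — turns the right-hand side into a diagram that, after unfolding $q^{*}$ and $p^{*}$ the same way and applying associativity, is precisely $\whitemult\circ(q^{*}\otimes p^{*})=q^{*}\cdot p^{*}$. (Packaging this as ``conjugation is a monoidal functor and $(\whitemult)^{*}=\whitemult\circ\sigma$'' is the same computation organised differently.)

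Granting (i), part (ii) is immediate: if $p\cdot q$ is a projection it is self-conjugate, so $p\cdot q=(p\cdot q)^{*}$; since $p$ and $q$ are projections, $p^{*}=p$ and $q^{*}=q$, whence $p\cdot q=(p\cdot q)^{*}=q^{*}\cdot p^{*}=q\cdot p$ by (i). The one genuinely delicate point is the diagram chase in (i): getting the wire crossing to land correctly so that the order truly reverses, and pinning down which axioms are in play — I expect (co)associativity and the Frobenius condition do the work, with symmetry of $(A,\whitemult)$ entering only to identify the two a priori distinct ways of bending a wire into $p^{*}$. There is no conceptual difficulty beyond this bookkeeping.
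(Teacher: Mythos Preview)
Your proposal is correct and follows essentially the same route as the paper: part (i) is a diagram chase that hinges on exactly the identity recorded as Lemma~\ref{lemFrob} (together with (co)associativity and the Frobenius law), and your derivation of (ii) from (i) via self-conjugateness of projections is word-for-word the paper's argument. Your remark that (i) holds for arbitrary points, not just projections, is a correct sharpening that the paper's proof also supports but does not state.
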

\begin{proof}
 (i) We have
 \[
 (p\cdot q)^* = \left( %
\beginpgfgraphicnamed{starproof1}
\begin{tikzpicture}[dotpic]
	\begin{pgfonlayer}{nodelayer}
		\node [style=none] (0) at (0, 1.25) {};
		\node [style=white dot] (1) at (0, 0.25) {};
		\node [style=point] (2) at (0.75, -0.75) {$q$};
		\node [style=point] (3) at (-0.75, -0.75) {$p$};
	\end{pgfonlayer}
	\begin{pgfonlayer}{edgelayer}
		\draw [style=diredge, bend left] (3) to (1);
		\draw [style=diredge, bend right] (2) to (1);
		\draw [style=diredge] (1) to (0.center);
	\end{pgfonlayer}
\end{tikzpicture}}
\endpgfgraphicnamed \right)^{\!\!\!*}=
\beginpgfgraphicnamed{starproof2}
\begin{tikzpicture}[dotpic]
	\begin{pgfonlayer}{nodelayer}
		\node [style=copoint] (0) at (-1.25, 0.75) {$p$};
		\node [style=copoint] (1) at (0.25, 0.75) {$q$};
		\node [style=white dot] (2) at (-0.5, -0.25) {};
		\node [style=white dot] (3) at (0.25, -0.75) {};
		\node [style=none] (4) at (1.5, 1) {};
	\end{pgfonlayer}
	\begin{pgfonlayer}{edgelayer}
		\draw [style=diredge, in=-90, out=150] (2) to (0);
		\draw [style=diredge, in=-90, out=30] (2) to (1);
		\draw [style=diredge, bend left=15, looseness=1.25] (3) to (2);
		\draw [style=diredge, bend right] (3) to (4.center);
	\end{pgfonlayer}
\end{tikzpicture}}
\endpgfgraphicnamed  =
\beginpgfgraphicnamed{starproof3}
\begin{tikzpicture}[dotpic]
	\begin{pgfonlayer}{nodelayer}
		\node [style=copoint] (0) at (-1, 1) {$p$};
		\node [style=copoint] (1) at (0.5, 1) {$q$};
		\node [style=white dot] (2) at (1.25, -0.25) {};
		\node [style=white dot] (3) at (2.25, 0.5) {};
		\node [style=white dot] (4) at (1.25, -1.25) {};
		\node [style=none] (5) at (2.25, 1.25) {};
	\end{pgfonlayer}
	\begin{pgfonlayer}{edgelayer}
		\draw [style=diredge, in=-150, out=0] (2) to (3);
		\draw [style=diredge, in=-90, out=180, looseness=0.75] (4) to (0);
		\draw [style=diredge, in=-90, out=180] (2) to (1);
		\draw [style=diredge, in=-30, out=0] (4) to (3);
		\draw [style=diredge] (3) to (5.center);
	\end{pgfonlayer}
\end{tikzpicture}}
\endpgfgraphicnamed  = %
\beginpgfgraphicnamed{starproof4}
\begin{tikzpicture}[dotpic]
	\begin{pgfonlayer}{nodelayer}
		\node [style=copoint] (0) at (0.5, 1) {$p$};
		\node [style=copoint] (1) at (-1, 1) {$q$};
		\node [style=white dot] (2) at (0.25, -0.75) {};
		\node [style=white dot] (3) at (2.25, 0.5) {};
		\node [style=white dot] (4) at (1.5, -0.75) {};
		\node [style=none] (5) at (2.25, 1.25) {};
	\end{pgfonlayer}
	\begin{pgfonlayer}{edgelayer}
		\draw [style=diredge, in=-150, out=0] (2) to (3);
		\draw [style=diredge, in=-90, out=180, looseness=0.75] (4) to (0);
		\draw [style=diredge, in=-90, out=180] (2) to (1);
		\draw [style=diredge, in=-30, out=0] (4) to (3);
		\draw [style=diredge] (3) to (5.center);
	\end{pgfonlayer}
\end{tikzpicture}}
\endpgfgraphicnamed = q^*\cdot p^* \ , 
 \]
 where the middle equation follows from Lemma~\ref{lemFrob}.
(ii) If $p\cdot q=r$ then, by self-conjugateness of projections and
(i), $q\cdot p=q^*\cdot p^*=(p\cdot q)^*=r^*=r=p \cdot q$. 
\end{proof} 

\begin{theorem}\label{thm:proj}
  In a dagger symmetric monoidal category, projections on a symmetric dagger Frobenius algebra 
with a zero projection are  partially ordered and come  
  with an orthogonality relation. 
\end{theorem}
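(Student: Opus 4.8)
The plan is to define the order on projections in the obvious way suggested by the introduction, namely $p \leq q$ iff $p \cdot q = p$, and then verify the three reflexivity/antisymmetry/transitivity axioms for a partial order, followed by checking that $p \perp q$ iff $p \cdot q = 0$ defines an orthogonality relation in the sense of the definition above. Throughout, the main technical tools are idempotence and self-conjugateness of projections (equation~\eqref{def:projection}), Lemma~\ref{lemsym}(i) giving $(p\cdot q)^* = q^* \cdot p^*$, and Lemma~\ref{lemsym}(ii) giving that whenever $p\cdot q$ is itself a projection it is automatically central in the pair, i.e.\ $p\cdot q = q\cdot p$.

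For reflexivity, $p \leq p$ is exactly the idempotence condition $p \cdot p = p$. For antisymmetry, suppose $p \leq q$ and $q \leq p$, so $p\cdot q = p$ and $q \cdot p = q$; I would apply $(-)^*$ to the first equation and use Lemma~\ref{lemsym}(i) together with self-conjugateness $p^* = p$, $q^* = q$ to get $q \cdot p = (p \cdot q)^* = p^* = p$, hence $p = q$. For transitivity, suppose $p \leq q \leq r$, so $p \cdot q = p$ and $q \cdot r = q$. Then $p \cdot r = (p \cdot q) \cdot r = p \cdot (q \cdot r) = p \cdot q = p$ using associativity of $\whitemult$ on points; this shows $p \leq r$. (One subtlety worth a line: to even call $\leq$ a relation on projections I should note that these equations make sense for any points, and no closure property of the set of projections under $\cdot$ is needed for the order axioms themselves.)

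For the orthogonality relation with $p \perp q :\Leftrightarrow p\cdot q = 0$: symmetry follows because $0 = (p\cdot q)^* = q^* \cdot p^* = q \cdot p$ by Lemma~\ref{lemsym}(i) and self-conjugateness (noting $0^* = 0$, which holds since $0$ is a point and conjugation is just the compact-structure transpose composed with dagger; alternatively $0\cdot p = 0$ for all projections $p$ including $0$ itself, and one derives $0^*=0$ directly). Antireflexivity above zero: if $p \perp p$ then $p \cdot p = 0$, but $p \cdot p = p$ by idempotence, so $p = 0$. Downward closure: given $p \leq p'$, $q \leq q'$, and $p' \cdot q' = 0$, compute $p \cdot q = (p\cdot p')\cdot(q'\cdot q)$ — wait, here is the one place care is needed — I would instead write $p\cdot q = (p \cdot p') \cdot q = p\cdot (p'\cdot q)$ and separately manipulate $p' \cdot q$ using $q\leq q'$; concretely $p'\cdot q = p'\cdot(q\cdot q')$, and one must route $q$ past to reach $p'\cdot q' = 0$.

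The main obstacle I anticipate is precisely this last step: downward closure genuinely requires moving a factor across a product in a noncommutative algebra, and the naive manipulation fails because $p\cdot q$ need not be a projection so Lemma~\ref{lemsym}(ii) does not apply directly. The clean route is to use that the relevant annihilation propagates: from $p'\cdot q' = 0$ one gets $q \cdot p' \cdot q' \cdot p = 0$, and then insert $q = q\cdot q'$ and $p = p'\cdot p$ in the right spots, finally reading off $q\cdot p = 0$ and transposing back via Lemma~\ref{lemsym}(i) to conclude $p\cdot q = 0$. So I expect the proof to be a short diagrammatic chase for the order axioms and a slightly more delicate associativity-plus-conjugation argument for downward closure, with everything else immediate from \eqref{def:projection} and Lemma~\ref{lemsym}.
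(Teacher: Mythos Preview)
Your approach is essentially the paper's: same definitions of $\leq$ and $\perp$, same use of idempotence for reflexivity and antireflexivity, same associativity argument for transitivity, and the same reliance on Lemma~\ref{lemsym}. Where you use part~(i) for antisymmetry and symmetry, the paper invokes part~(ii) directly (since $p\cdot q = p$ and $p\cdot q = 0$ are both projections), but these are trivially interchangeable.

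The one place you make things harder than necessary is downward closure. You wrote $p\cdot q = (p\cdot p')\cdot(q'\cdot q)$ and then backed away, worrying that Lemma~\ref{lemsym}(ii) ``does not apply directly'' because $p\cdot q$ need not be a projection. But the lemma is not being applied to $p\cdot q$: it is applied to $q\cdot q'$, which equals $q$ and hence \emph{is} a projection, giving $q'\cdot q = q\cdot q' = q$. So your first line was already correct, and the paper's computation is simply
\[
p\cdot q \;=\; (p\cdot p')\cdot(q'\cdot q) \;=\; p\cdot(p'\cdot q')\cdot q \;=\; p\cdot 0\cdot q \;=\; 0,
\]
with a second appeal to Lemma~\ref{lemsym}(ii) (applied to $0\cdot p = 0$) to justify $p\cdot 0 = 0$ at the end. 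Your alternative route via $q\cdot p$ also works, but it still uses Lemma~\ref{lemsym}(ii) in the same way (to get $p'\cdot p = p$ and $q\cdot 0 = 0$), so nothing is gained by the detour. Incidentally, your worry about $0^* = 0$ is also unnecessary: the zero projection is by definition a projection, hence self-conjugate.
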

\begin{proof}
  The order is defined as $p\leq q \Longleftrightarrow p\cdot q =p$.  Reflexivity follows by the idempotence of projections.  
  If $p\cdot q= p$ and $q\cdot p= q$ then by Lemma \ref{lemsym}  (ii) we have $p=q$, so the order is anti-symmetric. If $p\cdot q= p$ and $q\cdot r= q$ then $p\cdot r = p\cdot q\cdot r= p\cdot q = p$, so the order is transitive. 
  
  Orthogonality is defined as $p\perp q \Longleftrightarrow p\cdot q =0$.    Symmetry follows by Lemma \ref{lemsym}  (ii) and anti-reflexivity above 0 by idempotence of projections.  If $p\cdot p'=p$, $q\cdot q'=q$ and $p'\cdot q' = 0$ then $p\cdot q= p\cdot p' \cdot q'\cdot q= p\cdot 0 \cdot q= p\cdot 0= 0$ where we twice relied on Lemma \ref{lemsym}  (ii). 
\end{proof}

\begin{remark}
The  zero projection guarantees that the partially ordered set has a bottom element. 
\end{remark}

Given a symmetric dagger Frobenius algebra $(A,\whitemult)$, we will
denote the partial order and orthogonality of the previous theorem as $\Proj(A,\whitemult)$.
The following two examples correspond to the ``pure classical''
and the ``pure quantum'' in the ``concrete'' case of \FHilb. 

\begin{example}
  Commutative dagger special Frobenius algebras $(H,\whitemult)$ in
  \FHilb correspond to orthonormal bases of $H$~\cite{CPV}. For
  $\Proj(H,\whitemult)$, we obtain the atomistic Boolean algebra whose
  atoms are the 1-dimensional projections on the basis vectors.   
\end{example}

\begin{example}
  If $H$ is a finite-dimensional Hilbert space with any chosen compact
  structure on it, then $L(H) = (H^* \otimes H, \pantsalg)$ is a
  symmetric dagger Frobenius algebra in \FHilb. For $\Proj(L(H))$ we
  obtain the usual projection lattice of projections $H \to H$, the paradigmatic
  example in~\cite{BvN}.   
\end{example}

\begin{remark}
  In \cite{QPL_CPstar}, it is shown that algebras of the form $(A^*
  \otimes A, \pantsalg)$ are those that realise Selinger's
  CPM--construction as a fragment of the encompassing CP*--construction.
  The commutative dagger special Frobenius algebras were the
  ones used to underpin abstract categories of stochastic maps in~\cite{CPaqPav}. 
\end{remark}

\begin{proposition}
  Let $(A,\whitemult)$ be any symmetric dagger Frobenius algebra in
  any dagger symmetric monoidal category. For $p, q\in
  \Proj(A,\whitemult)$, the following are equivalent:
  \begin{enumerate}[(a)]
  \item $p$ and $q$ commute;
  \item $p\cdot q\in \Proj(A,\whitemult)$;
  \item $p\cdot q$ is the greatest lower bound of $p$ and $q$ in the
    partial order $\Proj(A,\whitemult)$.
  \end{enumerate}
\end{proposition}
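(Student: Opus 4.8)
The plan is to prove the cycle of implications $(a)\Rightarrow(b)\Rightarrow(c)\Rightarrow(a)$, reusing the lemmas already established. The implications $(c)\Rightarrow(b)$ and $(b)\Rightarrow(a)$ are essentially free: $(c)$ trivially gives $(b)$ since a greatest lower bound is by definition an element of $\Proj(A,\whitemult)$, and $(b)\Rightarrow(a)$ is exactly Lemma~\ref{lemsym}(ii). So the content lies in $(a)\Rightarrow(b)$ and $(b)\Rightarrow(c)$.

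For $(a)\Rightarrow(b)$, assume $p\cdot q = q\cdot p$ and verify that $r := p\cdot q$ is a projection, i.e.\ satisfies both equations of~\eqref{def:projection}. For idempotence, $r\cdot r = p\cdot q\cdot p\cdot q = p\cdot p\cdot q\cdot q = p\cdot q = r$, using commutativity in the middle and idempotence of $p$ and $q$ (associativity of $\whitemult$-multiplication of points is needed to move the parentheses around, which is justified since $\whitemult$ is associative). For self-conjugateness, I would use Lemma~\ref{lemsym}(i): $r^* = (p\cdot q)^* = q^*\cdot p^* = q\cdot p = p\cdot q = r$, where the second-to-last step uses self-conjugateness of $p$ and $q$ individually and the last uses commutativity. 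Graphically this is a short diagram chase; I would present it either symbolically as above or with a small \tikzfig.

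For $(b)\Rightarrow(c)$, assume $r = p\cdot q$ is a projection; by Lemma~\ref{lemsym}(ii) we also have $r = q\cdot p$. First show $r$ is a lower bound: $r\cdot p = p\cdot q\cdot p = p\cdot p\cdot q = p\cdot q = r$, so $r\le p$, and symmetrically $r\cdot q = q\cdot p\cdot q = q\cdot q\cdot p = q\cdot p = r$, so $r\le q$ (again commutativity of $p$ and $q$ is available here since $p\cdot q$ is a projection). Then show $r$ is greatest: if $s\in\Proj(A,\whitemult)$ with $s\le p$ and $s\le q$, then $s\cdot p = s$ and $s\cdot q = s$, hence $s\cdot r = s\cdot p\cdot q = s\cdot q = s$, so $s\le r$. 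This completes the cycle.

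I expect no serious obstacle; the main thing to be careful about is that ``multiplication of points'' $p\cdot q$ is associative and has the unit $\whiteunit$ as a two-sided identity, which follows from associativity and unitality of the Frobenius algebra — these are used silently throughout and I would note it once. A secondary point worth stating explicitly is that commutativity of the \emph{two particular points} $p$ and $q$ (not of the whole algebra) is exactly what each step needs, so the proposition genuinely characterises pairwise-commuting projections rather than requiring a commutative algebra.
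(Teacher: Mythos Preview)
Your proof is correct and is exactly the routine verification the paper has in mind: the paper's own proof is the single line ``Unfold the definitions of Theorem~\ref{thm:proj}'', and you have carried out precisely that unfolding via the cycle $(a)\Rightarrow(b)\Rightarrow(c)\Rightarrow(b)\Rightarrow(a)$, using Lemma~\ref{lemsym} in the expected places.
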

\begin{proof}
  Unfold the definitions of Theorem~\ref{thm:proj}.  
\end{proof}

In general, every commutative monoid of idempotents is a
meet-semilattice with respect to the order $p\leq q \Longleftrightarrow
p\cdot q =p$, and if it is furthermore finite, then it is even a (complete)
lattice.  As shown in~\cite{CPaqPav}, in this case the notion of an idempotent
 can be generalised to arbitrary types $A\to B$. Considered
together for all algebras, this always yields a cartesian bicategory of
relations in the sense of Carboni-Walters~\cite{CarboniWalters}.
The conclusion we draw from the previous proposition is the following:
considering noncommutative algebras obstructs the construction of the 
categorical operation of composition. 

\section{Composing quantum logics}\label{sec:composing}

Given two symmetric dagger Frobenius algebras we can define their
\emph{tensor} as follows.
\[
  (A,\whitemult) \otimes (B, \graymult) := (A \otimes B, \prodmult{white dot}{gray dot})
\]
It is easily seen to inherit the entire algebraic structure.  So we
can define a compositional structure on the corresponding partial orders
with orthogonality as follows.
\[
  \Proj(A,\whitemult) \otimes \Proj(B, \graymult) := \Proj(A \otimes B, \prodmult{white dot}{gray dot})\ .
\]
By a \emph{bi-order map} we mean a function of two variables that 
preserves the order in each argument separately when the other
one is fixed (\emph{cf.}\ bilinearity of the tensor product). 

\begin{theorem}
  The following is a bi-order map.
  \begin{align*}
    - \otimes - \colon \Proj(A,\whitemult) \times \Proj(B, \graymult)
    & \to 
    \Proj(A,\whitemult) \otimes \Proj(B, \graymult) \\ (p, q) & \mapsto p\otimes q
  \end{align*}
 If the monoidal structure moreover preserves zeros, that is, if $0_A$ is a (necessarily unique) zero with respect to $A$ then for all $q:I\to B$ we have that $0_A\otimes q$ is a zero  with respect to $A\otimes B$, then the map $- \otimes -$ also preserves orthogonality in each component.
\end{theorem}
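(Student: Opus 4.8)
The plan is to verify the two assertions separately and in the obvious order: first that $-\otimes-$ respects the orders, then that under the extra hypothesis it respects orthogonality. For the first part, I would use the definition of the order on $\Proj(A\otimes B, \prodmult{white dot}{gray dot})$ supplied by Theorem~\ref{thm:proj}, namely $x\leq y \Longleftrightarrow x\cdot y = x$, where now the multiplication of points is with respect to $\prodmult{white dot}{gray dot}$. The key observation is that $(p\otimes q)\cdot(p'\otimes q')$ computed with the tensor multiplication factors as $(p\cdot p')\otimes(q\cdot q')$, since $\prodmult{white dot}{gray dot}\circ\sigma = (\whitemult\otimes\graymult)$ up to the symmetry isomorphisms on $A\otimes A\otimes B\otimes B$, and the tensor product is a bifunctor. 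Fixing $q$ and assuming $p\leq p'$ in $\Proj(A,\whitemult)$, we get $(p\otimes q)\cdot(p'\otimes q) = (p\cdot p')\otimes(q\cdot q) = p\otimes q$ using $p\cdot p'=p$ and idempotence $q\cdot q=q$ of the projection $q$; hence $p\otimes q\leq p'\otimes q$. The argument in the other variable is symmetric. One should also note in passing that $p\otimes q$ is genuinely a projection of the tensor algebra — idempotence is the computation just done with $p'=p$, and self-conjugateness follows because the induced compact structure on $A\otimes B$ is the tensor of the induced compact structures, so $(p\otimes q)^* = p^*\otimes q^* = p\otimes q$.

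For the orthogonality part, recall that $x\perp y \Longleftrightarrow x\cdot y = 0_{A\otimes B}$ in $\Proj(A\otimes B,\prodmult{white dot}{gray dot})$, where $0_{A\otimes B}$ is the zero projection of the tensor algebra. Suppose $p\perp p'$ in $\Proj(A,\whitemult)$, i.e.\ $p\cdot p' = 0_A$, and fix $q, q'\in\Proj(B,\graymult)$. Then by the factorisation above, $(p\otimes q)\cdot(p'\otimes q') = (p\cdot p')\otimes(q\cdot q') = 0_A\otimes(q\cdot q')$. By the hypothesis that the monoidal structure preserves zeros, $0_A\otimes(q\cdot q')$ is a zero projection of $A\otimes B$; since zero projections are unique (they are bottom elements of the partial order, hence unique when they exist — this is why the definition demands $0\cdot p = 0$ for \emph{all} projections $p$), it equals $0_{A\otimes B}$. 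Therefore $p\otimes q\perp p'\otimes q'$, which is the claimed preservation of orthogonality in the first component; again the second component is symmetric.

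The only genuinely delicate point — and the step I would spell out most carefully — is the factorisation $(p\otimes q)\cdot(p'\otimes q') = (p\cdot p')\otimes(q\cdot q')$ together with the compatibility of the induced compact structures with $\otimes$. Both are "obvious from the pictures" but rely on a careful bookkeeping of the symmetry isomorphisms hidden in the notation $\prodmult{white dot}{gray dot}$, which is really $(\whitemult\otimes\graymult)\circ(1_A\otimes\sigma_{A,B}\otimes 1_B)$; naturality of $\sigma$ and bifunctoriality of $\otimes$ are exactly what make everything go through, and this is precisely the content of the parenthetical remark in the statement that the tensor algebra "inherits the entire algebraic structure." Everything else — idempotence, self-conjugateness, and the uniqueness of zero projections — is either immediate or already recorded in the order axioms of Theorem~\ref{thm:proj}. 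I would therefore structure the proof as: (1) the factorisation lemma for point-multiplication in a tensor algebra; (2) $p\otimes q$ is a projection; (3) bi-order; (4) the orthogonality clause using uniqueness of zeros.
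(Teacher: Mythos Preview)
Your proposal is correct and follows essentially the same route as the paper: the paper's proof is exactly the factorisation $(p\otimes q)\cdot(p'\otimes q) = (p\cdot p')\otimes(q\cdot q)$ followed by idempotence of $q$ for the order clause, and the same factorisation followed by the zero-preservation hypothesis for the orthogonality clause. You add a little more bookkeeping than the paper does---verifying that $p\otimes q$ is itself a projection and noting the uniqueness of zeros---but these are exactly the details the paper sweeps under the remark that the tensor algebra ``inherits the entire algebraic structure,'' and your identification of the factorisation as the only nontrivial step matches the paper's use of a diagrammatic equality at that point.
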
 
\begin{proof}
If $p\cdot p' = p$ then: 
\[
(p\otimes q)\cdot (p'\otimes q) = %
\beginpgfgraphicnamed{densproof1}
\begin{tikzpicture}[dotpic]
	\begin{pgfonlayer}{nodelayer}
		\node [style=none] (0) at (0, 1.25) {};
		\node [style=white dot] (1) at (0, 0.25) {};
		\node [style=point] (2) at (1.25, -0.75) {$p'$};
		\node [style=point] (3) at (-1.25, -0.75) {$p$};
		\node [style=gray dot] (4) at (1.25, 0.25) {};
		\node [style=point] (5) at (2.5, -0.75) {$q$};
		\node [style=point] (6) at (0, -0.75) {$q$};
		\node [style=none] (7) at (1.25, 1.25) {};
	\end{pgfonlayer}
	\begin{pgfonlayer}{edgelayer}
		\draw [style=diredge, bend left] (3) to (1);
		\draw [style=diredge, bend right] (2) to (1);
		\draw [style=diredge] (1) to (0.center);
		\draw [style=diredge, bend left] (6) to (4);
		\draw [style=diredge, bend right] (5) to (4);
		\draw [style=diredge] (4) to (7.center);
	\end{pgfonlayer}
\end{tikzpicture}}
\endpgfgraphicnamed = %
\beginpgfgraphicnamed{densproof2}
\begin{tikzpicture}[dotpic]
	\begin{pgfonlayer}{nodelayer}
		\node [style=none] (0) at (-0.5, 1.25) {};
		\node [style=white dot] (1) at (-0.5, 0.25) {};
		\node [style=point] (2) at (0, -0.75) {$p'$};
		\node [style=point] (3) at (-1.25, -0.75) {$p$};
		\node [style=gray dot] (4) at (1.75, 0.25) {};
		\node [style=point] (5) at (2.5, -0.75) {$q$};
		\node [style=point] (6) at (1.25, -0.75) {$q$};
		\node [style=none] (7) at (1.75, 1.25) {};
	\end{pgfonlayer}
	\begin{pgfonlayer}{edgelayer}
		\draw [style=diredge, bend left] (3) to (1);
		\draw [style=diredge, bend right] (2) to (1);
		\draw [style=diredge] (1) to (0.center);
		\draw [style=diredge, bend left] (6) to (4);
		\draw [style=diredge, bend right] (5) to (4);
		\draw [style=diredge] (4) to (7.center);
	\end{pgfonlayer}
\end{tikzpicture}}
\endpgfgraphicnamed= (p\cdot p')\otimes (q\cdot q) = p\otimes q\ .
\]
If $p\cdot p' = 0$ then $(p\otimes q)\cdot (p'\otimes q) = (p\cdot p')\otimes (q\cdot q) = 0_A\otimes q=0_{A\otimes B}$. 
\end{proof}

\begin{remark}
The assumption of the existence of zero projections as well as the assumption of monoidal structure preserving zeros, are both comprehended by the single assumption of the existence of a ``zero scalar'', that is, a morphism $0_I: I\to I$ such that for any other morphisms $f, g: A\to B$ we have that $0_I\otimes f=0_I\otimes g$.  We can then define zero projections  $0_A:=\lambda_A\circ (0_I\otimes 1_A)\circ\lambda^{-1}_A$ where  $\lambda_A:A\simeq I\otimes A$.
\end{remark}

\section{Commutativity versus distributivity}

Having abstracted projection lattices to the setting of arbitrary
dagger symmetric monoidal categories, we can now consider other models
than Hilbert spaces. 

We will be interested in the category $\Rel$ of
sets and relations, where the monoidal structure is taken to be
Cartesian product, and the dagger is given by relational converse.
This setting will provide a counterexample to equation~\eqref{eq:NonComDist}.
Here, symmetric dagger Frobenius algebras were identified by
Pavlovic (in the commutative case) and Heunen--Contreras--Cattaneo (in
general) in~\cite{DuskoRelations} and~\cite{HeunenRelations}, respectively. They
are in 1-to-1 correspondence with small groupoids. As it turns out,
even in the commutative case, groupoids may yield nondistributive projection
lattices. 

\begin{proposition}
  Let $\cat{G}$ be a groupoid, and $(G,\whitemult)$ the corresponding
  symmetric dagger Frobenius algebra in $\Rel$. Elements of
  $\Proj(G,\whitemult)$ are in 1-to-1 correspondence with subgroupoids
  of $\cat{G}$, \textit{i.e.} subcategories of $\cat{G}$ that are
  groupoids themselves.
\end{proposition}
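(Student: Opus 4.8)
The plan is to unwind the description of symmetric dagger Frobenius algebras in $\Rel$ in terms of groupoids, and then simply compute what equations~\eqref{def:projection} say for a point $p \colon I \to G$. Recall that in $\Rel$ a point $p \colon I \to G$ is just a subset $P \subseteq G$ of the morphism set of the groupoid $\cat{G}$, since $I$ is a singleton. For the groupoid algebra, $\whitemult$ relates the pair $(g,h)$ to $g \circ h$ whenever $g$ and $h$ are composable, and the induced compact structure sends a morphism to its inverse. Hence $p^*$ corresponds to $P^{-1} = \{ g^{-1} : g \in P\}$, and $p \cdot p$ corresponds to the subset $\{ g \circ h : g, h \in P,\ g,h\text{ composable}\}$.

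With these translations in hand, the two conditions of~\eqref{def:projection} become, respectively, (1) $P$ is closed under composition and every composite of elements of $P$ already lies in $P$, together with the reverse inclusion that every element of $P$ is such a composite — i.e.\ idempotence of $P$ under the partial multiplication; and (2) $P = P^{-1}$, i.e.\ $P$ is closed under inverses. So first I would show that a subgroupoid $\cat{H} \subseteq \cat{G}$ yields such a $P$: take $P$ to be the set of morphisms of $\cat{H}$. Closure under composition and inverses is immediate from $\cat{H}$ being a subcategory that is a groupoid; idempotence (the nontrivial inclusion $P \subseteq P \cdot P$) follows because any $g \in P$ equals $g \circ \mathrm{id}$, and the identity on the domain of $g$ lies in $P$ since $\mathrm{id} = g^{-1} \circ g$ with $g^{-1} \in P$ (using closure under inverses and composition). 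Conversely, I would show that any $P$ satisfying (1) and (2) is the morphism set of a subgroupoid: closure under inverses gives condition (2); for the subcategory structure one needs $P$ closed under composition (part of (1)) and containing the relevant identities — and again $\mathrm{id}_{\mathrm{dom}(g)} = g^{-1} \circ g \in P$ for each $g \in P$, so $P$ contains all identities at objects touched by $P$, and these objects with the morphisms of $P$ form a genuine subgroupoid.

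The correspondence is then clearly bijective, since in both directions we are identifying a subgroupoid with its set of morphisms, and these assignments are mutually inverse.

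The main obstacle is getting the idempotence condition exactly right: in $\Rel$, ``$p \cdot p = p$'' is an equality of subsets, which gives \emph{two} inclusions. The inclusion $P \cdot P \subseteq P$ is ordinary closure under composition, but the reverse inclusion $P \subseteq P \cdot P$ is the one that forces the presence of enough identities, and it is the subtle point that makes the argument go through — one must be careful that a subset closed under composition and inverses automatically contains $\mathrm{id}_{\mathrm{dom}(g)}$ and $\mathrm{id}_{\mathrm{cod}(g)}$ for every $g \in P$, which is exactly what is needed both for $P \subseteq P \cdot P$ and for $P$ to be a subcategory. I would also take care, when spelling out $p^*$, to confirm that the compact structure induced by the groupoid algebra (as in the diagram \tikzfig{cap_cup1}) indeed implements inversion of morphisms in $\Rel$; this is where one uses the explicit form of the groupoid multiplication relation, and it should be a short direct check rather than a real difficulty.
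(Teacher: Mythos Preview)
Your proposal is correct and complete. The translation of the two conditions in~\eqref{def:projection} into ``$P\cdot P = P$'' and ``$P = P^{-1}$'' is exactly right, and your observation that $g^{-1}\circ g \in P$ supplies the needed identities is the only nontrivial point in either direction; you have handled it properly.

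The paper, however, does not argue this way at all: its entire proof is a one-line appeal to~\cite[Theorem~16]{HeunenRelations}. So your approach is genuinely different in that it is self-contained, unwinding the groupoid Frobenius structure in $\Rel$ explicitly rather than invoking an external classification result. What your route buys is transparency---the reader sees directly why idempotence forces identities and why self-conjugateness forces inverse-closure---at the cost of a paragraph of computation. What the paper's route buys is brevity, but it presupposes familiarity with the cited theorem. Your verification that the Frobenius-induced cap and cup implement $(g,g^{-1})\leftrightarrow *$, and hence that $p^*$ corresponds to $P^{-1}$, is the one place where you should write out the two-line check rather than leave it as a remark; otherwise the argument is ready to stand on its own.
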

\begin{proof}
  This follows directly from~\cite[Theorem~16]{HeunenRelations}. 
\end{proof}

It immediately follows that in $\Rel$, like in $\FHilb$, the abstract
projection lattice is a complete lattice, even though we are not
dealing with finite sets.

\begin{corollary}
  If $(G,\whitemult)$ is a symmetric dagger Frobenius algebra in
  $\Rel$, then $\Proj(G,\whitemult)$ forms a complete lattice.
\end{corollary}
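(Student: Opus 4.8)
The plan is to deduce this from the preceding Proposition together with two observations, one general and one specific to $\Rel$. The general observation is that $\Proj(A,\whitemult)$ always has a greatest element: the unit $1\colon I\to A$ is a projection — idempotence is unitality, and self-conjugateness of the unit is counitality — and $p\cdot 1=p$ for every projection $p$, again by unitality, so $p\leq 1$. Since a poset with a top element in which every non-empty family admits a greatest lower bound is a complete lattice (the supremum of a family being the infimum of its set of upper bounds, which contains the top element and is therefore non-empty), it suffices to show that in $\Rel$ arbitrary non-empty families of projections have greatest lower bounds.

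I would pass to subgroupoids via the Proposition, identifying a projection with its set of morphisms, so that $\Proj(G,\whitemult)$ is the set of subgroupoids of $\cat G$ ordered by $H\leq K \Leftrightarrow H\cdot K=H$, and then make this order explicit. Unfolding the relational composite of $\whitemult$ with the points $H$ and $K$ shows that $H\leq K$ holds exactly when: every object of $H$ is an object of $K$; $K$ has no morphism connecting an object of $H$ to an object of $K$ that is not an object of $H$; and every morphism of $K$ whose source and target are both objects of $H$ already lies in $H$. This is worth recording because the order is neither inclusion nor reverse inclusion of subgroupoids — for a one-object groupoid it becomes reverse inclusion of subgroups, but $H\leq K$ can also hold when $H$ has strictly fewer objects than $K$.

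Using these three conditions I would build the meet of a family $\{H_i\}_{i\in I}$ directly. Among the subsets of $\bigcap_i\mathrm{Ob}(H_i)$ that are simultaneously unions of connected components of each $H_i$ there is a largest one $S$, since such subsets are closed under arbitrary union; let $M$ be the subgroupoid on object set $S$ generated by all morphisms of the $H_i$ lying between objects of $S$. The three conditions above make $M$ a lower bound of the family. Conversely, any lower bound $L$ has $\mathrm{Ob}(L)\subseteq S$ by maximality of $S$; since no generator of $M$ can connect $\mathrm{Ob}(L)$ to its complement in $S$, the restriction of $M$ to $\mathrm{Ob}(L)$ is generated by morphisms of the $H_i$ between objects of $\mathrm{Ob}(L)$ and hence is contained in $L$, so $L\leq M$. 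Thus $M=\bigwedge_i H_i$, and $\Proj(G,\whitemult)$ is a complete lattice.

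The part I expect to be the main obstacle is exactly the description and exploitation of the order on subgroupoids: the object set available to a lower bound is constrained in a subtle way by how the connected components of the various $H_i$ sit inside $\bigcap_i\mathrm{Ob}(H_i)$, and one must check both that $S$ as defined is attainable and that generating $M$ from the pieces $H_i|_S$ does not enlarge its object set or introduce morphisms that would break it being below some $H_j$. The remaining ingredients — the top element, the reduction of completeness to the existence of non-empty meets, and the verification of the three conditions from the definition of $\leq$ — are routine.
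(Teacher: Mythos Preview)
Your argument is correct. The paper's own proof consists of the single sentence ``The collection of subgroupoids is closed under arbitrary intersections.'' This would settle the matter if the partial order on $\Proj(G,\whitemult)$ were inclusion of subgroupoids, but as you rightly point out it is not: for a one-object groupoid the order is \emph{reverse} inclusion of subgroups, and in general it is neither inclusion nor its opposite. Indeed, the intersection of two subgroupoids need not even be a lower bound in the $\Proj$ order---take $H$ to be the connected two-object groupoid $\{1_x,1_y,f,f^{-1}\}$ and $K=\{1_x\}$; then $H\cap K=\{1_x\}$, but $\{x\}$ is not a union of connected components of $H$, so $H\cap K\not\leq H$.

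So the paper's one-line proof, read literally, has a gap, and your explicit description of the order together with your construction of the meet is precisely what is needed to close it. Your identification of the unit as top and the reduction of completeness to the existence of non-empty infima are routine, as you say; the substantive content is the order analysis and the verification that the subgroupoid $M$ you build on the maximal ``jointly component-closed'' object set $S$ really is the greatest lower bound, and both are handled correctly.
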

\begin{proof}
  The collection of subgroupoids is closed under arbitrary intersections.
\end{proof}

In fact, for our counterexample to equation~\eqref{eq:NonComDist}, it
suffices to consider groups (\emph{i.e.}\ single-object groupoids). In
this case abstract projections correspond to subgroups, and it is
known precisely under which conditions the lattice of 
subgroupoids is distributive, thanks to the following classical
theorem due to Ore. A group is \emph{locally cyclic} when
any finite subset of its elements generates a cyclic group.

\begin{theorem}\label{thm:dist-cyclic}
  The lattice of subgroups of a group $G$ is distributive if and only
  if $G$ is abelian and locally cyclic.
\end{theorem}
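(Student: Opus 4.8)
The plan is to prove the two implications separately; the ``if'' direction is routine, while the ``only if'' direction is the substance of the statement (it is Ore's theorem).

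For the ``if'' direction, suppose $G$ is abelian (written additively) and locally cyclic. Since in any lattice the two distributive laws are equivalent, it suffices to show $A\cap(B+C)\subseteq (A\cap B)+(A\cap C)$ for all subgroups $A,B,C$ of $G$. Given $x\in A\cap(B+C)$, write $x=b+c$ with $b\in B$ and $c\in C$; by local cyclicity the subgroup $E:=\langle b,c\rangle$ is cyclic, and it contains $x$, $b$ and $c$. Replacing $A,B,C$ by $A\cap E$, $B\cap E$, $C\cap E$ then reduces the required inclusion to the distributive law inside the subgroup lattice of the cyclic group $E$, which is just the lattice of divisors of $|E|$ (or, when $E\cong\mathbb{Z}$, the lattice of nonnegative integers under divisibility) — a finite product of chains, with at most a bottom element adjoined, hence distributive. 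This settles the ``if'' direction; note in passing that the hypothesis ``abelian'' is redundant, since any two elements of a locally cyclic group lie in a common cyclic, hence abelian, subgroup.

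For the ``only if'' direction, assume the subgroup lattice $L(G)$ is distributive. First I reduce to $2$-generated subgroups: for $a,b\in G$ the lattice $L(\langle a,b\rangle)$ is a sublattice of $L(G)$ (meets are intersections in both, and the join of two subgroups of $\langle a,b\rangle$ is the same whether computed in $\langle a,b\rangle$ or in $G$), hence again distributive; and if every $2$-generated subgroup of $G$ is cyclic then, by induction on the number of generators, so is every finitely generated subgroup, i.e.\ $G$ is locally cyclic and therefore abelian. So it remains to show that a $2$-generated group $H$ with $L(H)$ distributive is cyclic. The tool is Birkhoff's criterion that a lattice is distributive iff it contains no sublattice isomorphic to the diamond $M_3$ or the pentagon $N_5$; arguing contrapositively, suppose $H=\langle a,b\rangle$ is not cyclic. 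If $H$ is abelian, then being $2$-generated it contains a subgroup isomorphic to $\mathbb{Z}^2$, to $\mathbb{Z}\times\mathbb{Z}/p$, or to $\mathbb{Z}/p\times\mathbb{Z}/p$ for some prime $p$, and in each of these one exhibits an $M_3$ directly among a handful of explicit cyclic subgroups — for $\mathbb{Z}/p\times\mathbb{Z}/p$ the $p+1$ subgroups of order $p$; for $\mathbb{Z}^2$ the three ``slopes'' $\langle(1,0)\rangle$, $\langle(0,1)\rangle$, $\langle(1,1)\rangle$, whose pairwise meets are trivial and pairwise joins all $\mathbb{Z}^2$; and similarly in the mixed case — contradicting distributivity.

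The main obstacle is the remaining case, where $H$ is non-abelian, i.e.\ $a$ and $b$ do not commute: this is precisely where Ore's theorem does its work. Here I would use that every proper subgroup of $H$ again has distributive subgroup lattice, so by the previous paragraph any abelian proper subgroup is cyclic, and thereby pass to a minimal non-abelian subgroup $K\le H$ (non-abelian, all proper subgroups abelian, hence cyclic) — with the usual care that for infinite $H$ one must also rule out Tarski-type groups, whose subgroup lattices are of the form $M_\kappa$ and hence trivially non-distributive. For a minimal non-abelian $K$ one invokes its known structure (either a $p$-group with $K/\Phi(K)\cong \mathbb{Z}/p\times\mathbb{Z}/p$, or a semidirect product of an elementary abelian normal Sylow subgroup by a cyclic Sylow subgroup) to produce at least three pairwise-incomparable maximal (or minimal) subgroups whose pairwise intersections all coincide and whose pairwise joins are all of $K$, giving an $M_3$ in $L(K)\le L(H)$; when the pairwise intersections do not coincide, a short chain of normal subgroups between two of them together with a third incomparable subgroup instead yields an $N_5$. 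Either way $L(H)$ is not distributive, the desired contradiction. Everything else is bookkeeping; the non-abelian case is the only genuine difficulty. Finally, phrasing the theorem as ``abelian and locally cyclic'' rather than merely ``locally cyclic'' is precisely what makes transparent the contrast the paper is drawing with~\eqref{eq:NonComDist}: distributivity of the subgroup lattice is strictly stronger than commutativity of the group.
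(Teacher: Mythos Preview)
The paper does not actually prove this statement: its entire proof is the one-line citation ``See~\cite[Theorem~4]{ore:groups}''. Your proposal therefore goes well beyond what the paper does, supplying a genuine argument where the paper simply defers to Ore.

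Your ``if'' direction is correct and is the standard localisation to a cyclic subgroup generated by the two summands. Your ``only if'' direction is correctly structured: the reduction to $2$-generated subgroups is sound, and the abelian case is essentially right, though the ``similarly'' for $\mathbb{Z}\times\mathbb{Z}/p$ is too quick --- the three obvious ``slopes'' $\langle(1,0)\rangle$, $\langle(0,1)\rangle$, $\langle(1,1)\rangle$ do \emph{not} have coinciding pairwise meets there (the first and third meet in $\langle(p,0)\rangle$, not in $\{0\}$), so you need a slightly different triple, for instance $\langle(1,0)\rangle$, $\langle(1,1)\rangle$, and $p\mathbb{Z}\times\mathbb{Z}/p$, to exhibit an $M_3$.

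The non-abelian case is where the real content of Ore's theorem lies, and you are candid that your treatment there is only a sketch. The outline --- pass to a minimal non-abelian subgroup, invoke its structure theory, and extract an $M_3$ or $N_5$ --- has the right shape, but the details (existence of a minimal non-abelian subgroup when $H$ is infinite, the classification of finite minimal non-abelian groups, and the case analysis producing the forbidden sublattice) are precisely the substance of Ore's original argument and cannot be waved through in a paragraph. So your proposal is a correct and informative expansion of what the paper merely cites, with the caveat that fully discharging your final paragraph is itself a paper-length exercise.
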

\begin{proof}
  See~\cite[Theorem~4]{ore:groups}.  
\end{proof}

Perhaps the simplest example of an abelian group that is not locally cyclic is
$\mathbb{Z}_2\times \mathbb{Z}_2$. It has three nontrivial subgroups, namely:
\begin{align*}
  a&:= \mathbb{Z}_2\times\{0\}; \\ 
  b&:= \{(0,0), (1, 1)\}; \\
  c&:= \{0\} \times \mathbb{Z}_2.
\end{align*}
But evidently distributivity breaks down:
$a\wedge (b\vee c)=a \neq 0 = (a\wedge b)\vee (a\wedge c)$.

By Theorem~\ref{thm:dist-cyclic}, we know that the converse (distributive $\implies$ commutative) holds for groups, but what about for arbitrary groupoids. Consider the groupoid with two objects $x, y$ and the only non-identity arrows $f : x \to y$ and $f^{-1} : y \to x$. The lattice of subgroupoids has the following Hasse diagram:
\ctikzfig{counter_ex_lattice}
which is indeed distributive, but $f \circ f^{-1} \neq f^{-1} \circ f$. Thus we have proven the following corollary.

\begin{corollary}
  For symmetric dagger Frobenius algebras $(G,\whitemult)$ in $\Rel$:
  \[\xymatrix{
    (G,\whitemult) \text{ is commutative}\; \ar@<1ex>@{=>}|(.45){\not}[r]
    & \Proj(G,\whitemult) \;\text{ is distributive}. \ar@<1ex>@{=>}|(.55){\not}[l]
  }\]
  \qed
\end{corollary}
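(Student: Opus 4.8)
The plan is to refute both implications by exhibiting, for each, a single symmetric dagger Frobenius algebra in $\Rel$ that witnesses its failure; both witnesses have in fact already been described in the discussion above, so the work is to assemble them and to read off the projection lattices from the Proposition identifying projections of a groupoid algebra with subgroupoids.

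For the failure of ``commutative $\Rightarrow$ distributive'' I would take $\cat{G}$ to be the one-object groupoid $\mathbb{Z}_2\times\mathbb{Z}_2$. First I would record that the induced algebra $(G,\whitemult)$ in $\Rel$ is commutative: the multiplication relates a composable pair $(g,h)$ to its product $gh$, and because the group is abelian this relation is invariant under the swap, which is precisely commutativity of $\whitemult$. By the Proposition, $\Proj(G,\whitemult)$ is then the lattice of subgroups of $\mathbb{Z}_2\times\mathbb{Z}_2$, and with $a,b,c$ its three order-two subgroups the displayed identity $a\wedge(b\vee c)=a\neq 0=(a\wedge b)\vee(a\wedge c)$ shows that this lattice is not distributive. (This also follows at once from Ore's Theorem~\ref{thm:dist-cyclic}, since $\mathbb{Z}_2\times\mathbb{Z}_2$ is abelian but not locally cyclic.)

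For the failure of ``distributive $\Rightarrow$ commutative'' I would take $\cat{G}$ to be the connected groupoid on two objects $x,y$ with a single pair of nontrivial mutually inverse arrows $f\colon x\to y$, $f^{-1}\colon y\to x$. Here $(G,\whitemult)$ is not commutative, since $\whitemult$ sends $(f,f^{-1})$ to $1_y$ but $(f^{-1},f)$ to $1_x\neq 1_y$, i.e.\ $f\circ f^{-1}\neq f^{-1}\circ f$. On the other hand, I would enumerate the subgroupoids of $\cat{G}$ --- the empty one, the two one-object trivial subgroupoids, the discrete subgroupoid on $\{x,y\}$, and $\cat{G}$ itself --- check that their containment order gives the Hasse diagram drawn above, and verify distributivity by inspecting the distributive law on these few elements. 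Thus $\Proj(G,\whitemult)$ is distributive while $(G,\whitemult)$ is noncommutative.

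Putting the two examples together establishes both crossed implications, which is exactly the corollary. The only genuine bookkeeping --- and hence the (minor) main obstacle --- is checking that the meets and joins occurring in the two distributivity computations, which are defined through $\whitemult$ via Theorem~\ref{thm:proj}, transport correctly to the subgroupoid side under the bijection of~\cite[Theorem~16]{HeunenRelations} (in particular that meets correspond to intersections of subgroupoids, as already invoked in the complete-lattice corollary above); everything else is a routine unwinding of definitions.
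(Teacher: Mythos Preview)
Your proposal is correct and follows essentially the same approach as the paper: both counterexamples---the abelian group $\mathbb{Z}_2\times\mathbb{Z}_2$ for the failure of ``commutative $\Rightarrow$ distributive'' and the two-object connected groupoid for the failure of ``distributive $\Rightarrow$ commutative''---are precisely the ones the paper exhibits in the discussion immediately preceding the corollary, and your analysis of their projection lattices via the subgroupoid identification matches the paper's. Your additional remark about verifying that meets transport to intersections is a reasonable piece of hygiene that the paper leaves implicit (relying on the complete-lattice corollary), but it does not constitute a different route.
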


Let us finish by remarking on the copyable points in \Rel. As in
\FHilb, they differ from the projections. But unlike in \FHilb, where
there are only trivial copyable points, copyable points in \Rel are
more interesting, for similar reasons as the above corollary.

\begin{lemma}
  If $(G,\whitemult)$ is a symmetric dagger Frobenius algebra in \Rel
  corresponding to a groupoid $\cat{G}$, then its copyable points
  correspond to the connected components of $\cat{G}$.
\end{lemma}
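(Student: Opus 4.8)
The plan is to unwind the definition of a copyable point~\eqref{eq:copyable} in the concrete model of~\cite{HeunenRelations}, where the symmetric dagger Frobenius algebra $(G,\whitemult)$ in $\Rel$ corresponding to a groupoid $\cat{G}$ has carrier the set $G$ of all arrows of $\cat{G}$, comultiplication $g \mapsto \{(h,k) : h\circ k = g\}$, and counit picking out the identity arrows. A point $x \colon I \to G$ is simply a subset $X \subseteq G$, thought of as a set of arrows. First I would spell out what~\eqref{eq:copyable} says for such an $X$: copyability $\whitecomult \circ x = (x\otimes x)$ becomes the condition that the relational image of $X$ under comultiplication equals $X \times X$, i.e.\ $\{(h,k) : hk \in X\} = X\times X$; and the counit condition $\whitecounit\circ x = \mathrm{id}_I$ forces $X$ to contain at least one identity arrow. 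Rewriting the first condition: for all composable $h,k$, we have $hk \in X$ iff $h \in X$ and $k \in X$.

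From this combinatorial condition the identification with connected components should fall out in a few short steps. I would argue: (1) $X$ is closed under composition and under inverses — closure under composition is the ``$\Leftarrow$'' direction above, and closure under inverses follows because if $g \in X$ then taking $h = g$, $k = g^{-1}$ we get $hk = \mathrm{id}$, and one shows $\mathrm{id}_{\cod g} \in X$, hence applying the condition to $g^{-1}\circ g$ or $g\circ g^{-1}$ yields $g^{-1}\in X$; (2) $X$ contains every identity arrow it ``touches'': if $g \in X$ then $\mathrm{id}_{\dom g}, \mathrm{id}_{\cod g} \in X$ (apply the ``$\Rightarrow$'' direction to $g = g \circ \mathrm{id}_{\dom g}$); (3) crucially, the ``$\Rightarrow$'' direction also forces $X$ to be \emph{downward-closed under composition in a way that spreads across objects}: if $g \in X$ and $f$ is any arrow with $\cod f = \dom g$, then $g \circ f \in X$ would require $f \in X$, but that is not automatic — rather, the right move is to observe that $X$ a union of objects' worth of arrows: once $\mathrm{id}_x \in X$, then for any arrow $f$ with domain or codomain $x$, composability of $f$ with $\mathrm{id}_x$ plus the ``$\Rightarrow$'' condition applied to $f = f\circ \mathrm{id}_x \in X$? — here one must be careful, since we don't yet know $f \in X$. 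The clean way: show that the set of objects $\{x : \mathrm{id}_x \in X\}$ is closed under the relation ``connected by an arrow'', hence is a union of connected components, and then show $X$ is forced to be \emph{exactly} the set of all arrows between those objects.

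For that last step I would use both directions of the condition together. If $\mathrm{id}_x, \mathrm{id}_y \in X$ and $f\colon x\to y$ is any arrow, then $f^{-1}\circ f = \mathrm{id}_x \in X$ forces (by ``$\Rightarrow$'') $f \in X$ and $f^{-1}\in X$; so $X$ contains every arrow between objects whose identities lie in $X$. Conversely, if $f \in X$ then $\mathrm{id}_{\dom f}, \mathrm{id}_{\cod f}\in X$, so $\dom f$ and $\cod f$ lie in the same block; iterating, the object-set of $X$ is closed under connectedness, hence a union of connected components, and by the previous sentence $X$ is the full set of arrows over that union of components. Finally, the counit condition forces this union to be nonempty, and I would note that conversely any union of connected components evidently yields a copyable point, establishing the bijection. \emph{The main obstacle} is step~(3)/the last step: making sure that the ``$\Rightarrow$'' direction of the copyability equation genuinely propagates membership of an identity $\mathrm{id}_x$ to \emph{all} arrows incident to $x$, rather than merely to those already known to be in $X$; the trick of composing an arbitrary incident arrow $f$ with its inverse to land on $\mathrm{id}_x$, and then reading the condition backwards, is what makes this work, and it is worth stating carefully.
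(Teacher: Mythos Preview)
Your unfolding of the copyability equation contains a subtle but consequential gap. You correctly identify that $\whitecomult \circ x = x\otimes x$ becomes, in $\Rel$, the equality of subsets $\{(h,k) : hk \text{ is defined and } hk\in X\} = X\times X$. But your next rewriting, ``for all composable $h,k$, we have $hk\in X$ iff $h\in X$ and $k\in X$'', is strictly weaker: by restricting the quantifier to composable pairs you have discarded the requirement that every pair $(h,k)\in X\times X$ must \emph{be} composable in the first place. It is precisely this missing constraint that rules out proper unions of connected components. With only your weaker condition, your steps (1)--(3) do go through and show that $X$ is the arrow set of a union of connected components, but you cannot cut this down to a single component; correspondingly, your converse claim that ``any union of connected components evidently yields a copyable point'' is false --- for two isolated objects $x,y$, the set $X=\{\id_x,\id_y\}$ fails copyability because $(\id_x,\id_y)\in X\times X$ while $\id_x\circ\id_y$ is undefined. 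A smaller issue: the counit condition you invoke is not part of the paper's definition of copyable point, which is the single equation~\eqref{eq:copyable}.

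By contrast, the paper's argument is more direct. Rather than rewriting the equality as a biconditional over composable pairs, it parametrises the image of $X$ under comultiplication as $\{(g,fg^{-1}) : f\in X,\ g\in\Mor(\cat{G}),\ \dom f=\dom g\}$ and reads off from the containment in $X^2$ that every arrow $g$ sharing a domain with some element of $X$ must already lie in $X$. This one observation immediately delivers the ``spreading across the connected component'' that you work to obtain via identities and inverses in your step~(3).
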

\begin{proof}
  A point $x$ of $G$ in \Rel corresponds to a subset $X \subseteq
  \Mor(\cat{G})$. Copyability now means precisely that
  \[
    X^2 = \{ (g,fg^{-1}) \mid f \in X, g \in \Mor(\cat{G}),
    \dom(f)=\dom(g) \}.
  \]
  Hence if $f \in X$, and $g \in \Mor(\cat{G})$ has $\dom(f)=\dom(g)$,
  then also $g \in X$. Because $\cat{G}$ is a groupoid, this means
  that $X$ is precisely (the set of morphisms of a) connected
  component of $\cat{G}$.
\end{proof}

\section{Further work}

From the point of view of traditional quantum logic, a number of
questions arise, in particular about which categorical structure
yields which order structure:  
\begin{itemize}
\item when is the orthogonality relation an orthocomplementation?
\item when do we obtain an orthoposet? 
\item when do we obtain an orthomodular poset? 
\item when is the partial order a (complete) lattice?
\item when is this lattice Boolean, modular or orthomodular?
\end{itemize}
Conversely, what does the lattice structure say about the category?
An important first step is the characterisation of dagger Frobenius
algebras in more example categories besides \FHilb and \Rel. 

There is a clear intuition of the comultiplication of the algebra being a ``logical broadcasting operation'' in the sense of \cite{CSBayes}. A more general question then arises on the general operational significance of the partial ordering and orthogonality relation constructed in this paper.  

One of the more recent compelling results which emerged from quantum logic is the Faure-Moore-Piron theorem \cite{FMP} on the reconstruction of dynamics from the lattice structure together with the its operational interpretation.  A key ingredient is the reliance on Galois adjoints.  Does this construction have a counterpart within our framework, and its (to still be understood) operational significance?

\bibliographystyle{plain}
\bibliography{qlog}
\end{document}